\documentclass[letterpaper]{article} 

\usepackage[utf8]{inputenc} 
\usepackage[T1]{fontenc}    
\usepackage{microtype}     
\usepackage[colorlinks=true,linkcolor=black,anchorcolor=black,citecolor=black,filecolor=black,menucolor=black,runcolor=black,urlcolor=black]{hyperref}       
\usepackage{url}
\usepackage{graphicx}
\usepackage{amssymb}
\usepackage{algorithm}
\usepackage{algpseudocode}
\usepackage{multirow}
\usepackage{adjustbox}
\usepackage{tabularx,booktabs}
\usepackage{mathtools}
\usepackage{amsthm}
\usepackage{xcolor}
\usepackage{amsfonts}       
\usepackage{nicefrac}       
\usepackage{fullpage}
\usepackage{natbib}



\usepackage{amsmath,amsfonts,bm}


















\def\1{\bm{1}}










\DeclareMathAlphabet{\mathsfit}{\encodingdefault}{\sfdefault}{m}{sl}
\SetMathAlphabet{\mathsfit}{bold}{\encodingdefault}{\sfdefault}{bx}{n}











\newcommand{\E}{\mathbb{E}}

\newcommand{\R}{\mathbb{R}}



\DeclareMathOperator*{\argmin}{arg\,min}

\theoremstyle{plain}
\newtheorem*{theorem*}{Theorem}


\newtheorem{theorem}{Theorem}

\newtheorem{assumption}{Assumption}


\def\defn{\,\coloneqq\,}

\def\d{{\mathsf{\, d}}}

\def\prox{{\mathsf{prox}}}
\def\sprox{{\mathsf{sprox}}}

\def\min{\mathop{\mathsf{min}}}


\def\R{\mathbb{R}}
\def\E{\mathbb{E}}

\def\zerobm{{\bm{0}}}


\def\ebm{{\bm{e}}}

\def\xbm{{\bm{x}}}
\def\zbm{{\bm{z}}}
\def\ybm{{\bm{y}}}
\def\zbm{{\bm{z}}}
\def\sbm{{\bm{s}}}

\def\nbm{{\bm{n}}}

\def\vbm{{\bm{v}}}
\def\wbm{{\bm{w}}}
\def\kbm{{\bm{k}}}



\def\Abm{{\bm{A}}}
\def\Dbm{{\bm{D}}}

\def\Sbm{{\bm{S}}}

\def\Kbm{{\bm{K}}}


\def\Hbf{{\mathbf{H}}}
\def\Ibf{{\mathbf{I}}}

\def\Abm{{\bm{A}}}
\def\Dbm{{\bm{D}}}


\def\Ncal{{\mathcal{N}}}

\def\Dsf{{\mathsf{D}}}
\def\Gsf{{\mathsf{G}}}

\def\Isf{{\mathsf{I}}}
\def\Rsf{{\mathsf{R}}}
\def\Tsf{{\mathsf{T}}}

\def\Tsf{{\mathsf{T}}}

\def\Dsf{{\mathsf{D}}}

\def\Gsf{{\mathsf{G}}}
\def\Isf{{\mathsf{I}}}


\def\xbmast{{\bm{x}^\ast}}

\def\xbmhat{{\widehat{\bm{x}}}}

\def\argmin{\mathop{\mathsf{arg\,min}}} 

\newcommand{\norm}[1]{\left\lVert#1\right\rVert}

\title{A Restoration Network as an Implicit Prior}
\date{}

\author{Yuyang Hu\(^1\),  Mauricio Delbracio\(^2\), \\Peyman Milanfar\(^2\), Ulugbek S. Kamilov\(^{1,2}\) \\
\small \(^1\)Washington University in St. Louis, \(^2\)Google Research\\
\small \texttt{\{h.yuyang, kamilov\}@wustl.edu}, \texttt{\{mdelbra,milanfar\}@google.com}  \\
}

\begin{document}

\maketitle

\begin{abstract}
Image denoisers have been shown to be powerful priors for solving inverse problems in imaging. In this work, we introduce a generalization of these methods that allows any image restoration network to be used as an implicit prior. The proposed method uses priors specified by deep neural networks pre-trained as general restoration operators. The method provides a principled approach for adapting state-of-the-art restoration models for other inverse problems. Our theoretical result analyzes its convergence to a stationary point of a global functional associated with the restoration operator. Numerical results show that the method using a super-resolution prior achieves state-of-the-art performance both quantitatively and qualitatively. Overall, this work offers a step forward for solving inverse problems by enabling the use of powerful pre-trained restoration models as priors.
\end{abstract}

\section{Introduction}

Many problems in computational imaging, biomedical imaging, and computer vision can be formulated as \emph{inverse problems}, where the goal is to recover a high-quality images from its low-quality observations. Imaging inverse problems are generally ill-posed, thus necessitating the use of prior models on the unknown images for accurate inference. While the literature on prior modeling of images is vast, current methods are primarily based on \emph{deep learning (DL)}, where a deep model is trained to map observations to images~\citep{Lucas.etal2018, McCann.etal2017, Ongie.etal2020}.

Image denoisers have become popular for specifying image priors for solving inverse problems~\citep{Venkatakrishnan.etal2013, Romano.etal2017, Kadkhodaie.Simoncelli2021, Kamilov.etal2023}. Pre-trained denoisers provide a convenient proxy for image priors that does not require the description of the full density of natural images. The combination of state-of-the-art (SOTA) deep denoisers with measurement models has been shown to be effective in a number of inverse problems, including image super-resolution, deblurring, inpainting, microscopy, and medical imaging~\citep{Metzler.etal2018, Zhang.etal2017a, Meinhardt.etal2017, Dong.etal2019, Zhang.etal2019, Wei.etal2020, Zhang.etal2022} (see also the recent reviews
~\cite{Ahmad.etal2020, Kamilov.etal2023}). This success has led to active research on novel methods based on denoiser priors, their theoretical analyses, statistical interpretations, as well as connections to related approaches such as score matching and diffusion models~\citep{Chan.etal2016, Romano.etal2017, Buzzard.etal2017, Reehorst.Schniter2019, Sun.etal2018a, Sun.etal2019b, Ryu.etal2019, Xu.etal2020, Liu.etal2021b, Cohen.etal2021a, Hurault.etal2022, hurault2022proximal, Laumont.etal2022, Gan.etal2023a}.

Despite the rich literature on the topic, the prior work has narrowly focused on leveraging the statistical properties of denoisers. There is little work on extending the formalism and theory to priors specified using other types of image restoration operators, such as, for example, deep image super-resolution models. Such extensions would enable new algorithms that can leverage SOTA pre-trained restoration networks for solving other inverse problems. In this paper, we address this gap by developing the \emph{\textbf{D}eep \textbf{R}estoration \textbf{P}riors (\textbf{DRP})} methodology that provides a principled approach for using restoration operators as priors. We show that when the restoration operator is a \emph{minimum mean-squared error (MMSE)} estimator, DRP can be interpreted as minimizing a composite objective function that includes log of the density of the degraded image as the regularizer. Our interpretation extends the recent formalism  based on using MMSE denoisers as priors~\citep{bigdeli2017, Xu.etal2020, Kadkhodaie.Simoncelli2021, Laumont.etal2022, Gan.etal2023a}. We present a theoretical convergence analysis of DRP to a stationary point of the objective function under a set of clearly specified assumptions. We show the practical relevance of DRP by solving several inverse problems by using a super-resolution network as a prior. Our numerical results show the potential of DRP to adapt the super-resolution model to act as an effective prior that can outperform image denoisers. This work thus addresses a gap in the current literature by providing a new principled framework for using pre-trained restoration models as priors for inverse problems.

All proofs and some details that have been omitted for space appear in the appendix. 

\section{Background}
\textbf{Inverse Problems.} Many imaging problems can be formulated as inverse problems that seek to recover an unknown image $\xbm\in\R^n$ from from its corrupted observation 
\begin{equation}
\label{Eq:InverseProblem}
\ybm=\Abm\xbm + \ebm, 
\end{equation}
where $\Abm\in\R^{m\times n}$ is a measurement operator and $\ebm\in\R^m$ is the noise. A common strategy for addressing inverse problems involves formulating them as an optimization problem 
\begin{equation}
    \label{equ:optimization}
    \xbmhat \in \argmin_{\xbm\in\R^n} f(\xbm) \quad\text{with}\quad f(\xbm) = g(\xbm) + h(\xbm)\ ,
\end{equation}
where $g$ is the data-fidelity term that measures the fidelity to the observation $\ybm$ and $h$ is the regularizer that incorporates prior knowledge on $\xbm$. For example, common functionals in imaging inverse problems are the least-squares data-fidelity term  $g(\xbm)=\frac{1}{2}\norm{\Abm\xbm-\ybm}_2^2$ and the total variation (TV) regularizer $h(\xbm)=\tau\norm{\Dbm\xbm}_1$, where $\Dbm$ is the image gradient, and $\tau > 0$ a regularization parameter. 

\textbf{Deep Learning.} DL is extensively used for solving imaging inverse problems~\citep{McCann.etal2017, Lucas.etal2018, Ongie.etal2020}. Instead of explicitly defining a regularizer, DL methods often train convolutional neural networks (CNNs) to map the observations to the desired images~\citep{Wang2016.etal, DJin.etal2017, Kang.etal2017, Chen.etal2017, delbracio2021projected, indi2023}. Model-based DL (MBDL) is a widely-used sub-family of DL algorithms that integrate physical measurement models with priors specified using CNNs (see reviews by
~\cite{Ongie.etal2020, Monga.etal2021}). The literature of MBDL is vast, but some well-known examples include plug-and-play priors (PnP), regularization by denoising (RED), deep unfolding (DU), compressed sensing using generative models (CSGM), and deep equilibrium models (DEQ)~\citep{Bora.etal2017, Romano.etal2017, zhang2018ista, Hauptmann.etal2018, Gilton.etal2021, Liu.etal2022a}. These approaches come with different trade-offs in terms of imaging performance, computational and memory complexity, flexibility, need for supervision, and theoretical understanding.

\textbf{Denoisers as Priors.} PnP~\citep{Venkatakrishnan.etal2013, Sreehari.etal2016} is one of the most popular MBDL approaches for inverse problems based on using deep denoisers as imaging priors (see recent reviews by~\cite{Ahmad.etal2020, Kamilov.etal2023}). For example, the proximal-gradient method variant of PnP can be written as~\citep{Hurault.etal2022}
\begin{equation}
\label{Eq:GSPnP}
\xbm^k \leftarrow \prox_{\gamma g}(\zbm^k) \quad\text{with}\quad \zbm^k \leftarrow \xbm^{k-1} - \gamma \tau (\xbm^{k-1}-\Dsf_\sigma(\xbm^{k-1})),
\end{equation}
where $\Dsf_\sigma$ is a denoiser with a parameter $\sigma > 0$ for controlling its strength, $\tau > 0$ is a regularization parameter, and $\gamma > 0$ is a step-size. The theoretical convergence of PnP methods has been established for convex functions $g$ using monotone operator theory~\citep{Sreehari.etal2016, Sun.etal2018a, Ryu.etal2019}, as well as for nonconvex functions based on interpreting the denoiser as a MMSE estimator~\citep{Xu.etal2020} or ensuring that the term $(\Isf - \Dsf_\sigma)$ in \eqref{Eq:GSPnP} corresponds to a gradient $\nabla h$ of a function $h$ parameterized by a deep neural network~\citep{Hurault.etal2022, hurault2022proximal, Cohen.etal2021a}. Many variants of PnP have been developed over the past few years~\citep{Romano.etal2017, Metzler.etal2018, Zhang.etal2017a, Meinhardt.etal2017, Dong.etal2019, Zhang.etal2019, Wei.etal2020}, which has motivated an extensive research on its theoretical properties~\citep{Chan.etal2016, Buzzard.etal2017, Ryu.etal2019, Sun.etal2018a, Tirer.Giryes2019, Teodoro.etal2019, Xu.etal2020, Sun.etal2021, Cohen.etal2020, Hurault.etal2022, Laumont.etal2022, hurault2022proximal, Gan.etal2023a}.

This work is most related to two recent PnP-inspired methods using restoration operators instead of denoisers~\citep{Zhang.etal2019, Liu.etal2020}. Deep plug-and-play super-resolution (DPSR)~\citep{Zhang.etal2019} was proposed to perform image super-resolution under arbitrary blur kernels by using a bicubic super-resolver as a prior. Regularization by artifact removal (RARE)~\citep{Liu.etal2020} was proposed to use CNNs pre-trained directly on subsampled and noisy Fourier data as priors for magnetic resonance imaging (MRI). These prior methods did not leverage statistical interpretations of the restoration operators to provide a theoretical analysis for the corresponding PnP variants.

It is also worth highlighting the work of Gribonval and colleagues on theoretically exploring the relationship between MMSE restoration operators and proximal operators~\citep{Gribonval2011, Gribonval.Machart2013, Gribonval.Nikolova2021}. Some of the observations and intuition in that prior line of work is useful for the theoretical analysis of the proposed DRP methodology. 

\textbf{Our contribution.} (1) Our first contribution is the new method DRP for solving inverse problems using the prior implicit in a pre-trained deep restoration network. Our method is as a major extension of recent methods~\citep{bigdeli2017, Xu.etal2020, Kadkhodaie.Simoncelli2021, Gan.etal2023a} from denoisers to more general restoration operators. (2)  Our second contribution is a new theory that characterizes the solution and convergence of DRP under priors associated with the MMSE restoration operators. Our theory is general in the sense that it allows for nonsmooth data-fidelity terms and expansive restoration models. (3) Our third contribution is the implementation of DRP using the popular SwinIR~\citep{liang2021swinir} super-resolution model as a prior for two distinct inverse problems, namely deblurring and super-resolution. Our implementation that shows the potential of using restoration models to achieve SOTA performance.

\section{Deep Restoration Prior}

Image denoisers are currently extensively used as priors for solving inverse problems. We extend this approach by proposing the following method that uses a more general restoration operator. 
\begin{algorithm}[h]
\caption{Deep Restoration Priors (DRP)}
\label{alg:DRP}
\begin{algorithmic}[1]
\State \textbf{input: } Initial value $\xbm^0 \in \R^n$ and parameters $\gamma, \tau > 0$
\For{$k = 1, 2, 3, \dots$}
\State $\zbm^k \leftarrow \xbm^{k-1} - \gamma\tau \Gsf(\xbm^{k-1})$ where $\Gsf(\xbm) \defn \xbm-\Rsf(\Hbf\xbm)$
\State $\xbm^k \leftarrow  \sprox_{\gamma g}(\zbm^k)$
\EndFor
\end{algorithmic}
\end{algorithm}

The prior in Algorithm~\ref{alg:DRP} is implemented in Line 3 using a deep model $\Rsf: \R^p \rightarrow \R^n$ pre-trained to solve the following restoration problem
\begin{equation}
\label{Eq:RestorationProblem}
\sbm = \Hbf\xbm + \nbm \quad\text{with}\quad \xbm \sim p_\xbm, \quad \nbm \sim \Ncal(0, \sigma^2 \Ibf),
\end{equation}
where $\Hbf \in \R^{p \times n}$ is a degradation operator, such as blur or downscaling, and $\nbm \in \R^p$ is the \emph{additive white Gaussian noise (AWGN)} of variance $\sigma^2$. The density $p_\xbm$ is the prior distribution of the desired class of images. Note that the restoration problem~\eqref{Eq:RestorationProblem} is only used for training $\Rsf$ and doesn't have to correspond to the inverse problem in~\eqref{Eq:InverseProblem} we are seeking to solve. When $\Hbf = \Ibf$, the restoration operator $\Rsf$ reduces to an AWGN denoiser used in the traditional PnP methods~\citep{Romano.etal2017, Kadkhodaie.Simoncelli2021, Hurault.etal2022}. The goal of DRP is to leverage a pre-trained restoration network $\Rsf$ to gain access to the prior.

The measurement consistency is implemented in Line 4 using the \emph{scaled} proximal operator
\begin{equation}
\label{Eq:ScaledProx}
\sprox_{\gamma g}(\zbm) \defn \prox_{\gamma g}^{\Hbf^\Tsf\Hbf}(\zbm) = \argmin_{\xbm \in \R^n} \left\{\frac{1}{2}\|\xbm-\zbm\|_{\Hbf^\Tsf\Hbf}^2 + \gamma g(\xbm) \right\},
\end{equation}
where $\|\vbm\|_{\Hbf^\Tsf\Hbf} \defn \vbm^\Tsf\Hbf^\Tsf\Hbf\vbm$ denotes the weighted Euclidean seminorm of a vector $\vbm$. When $\Hbf^\Tsf\Hbf$ is positive definite and $g$ is convex, the functional being minimized in~\eqref{Eq:ScaledProx} is strictly convex, which directly implies that the solution is unique. On the other hand, when $g$ is not convex or $\Hbf^\Tsf\Hbf$ is positive semidefinite, there might be multiple solutions and the scaled proximal operator simply returns one of the solutions. It is also worth noting that~\eqref{Eq:ScaledProx} has an efficient solution when $g$ is the least-squares data-fidelity term (see for example the discussion in~\cite{Kamilov.etal2023} on efficient implementations of proximal operators of least-squares).

The fixed points of Algorithm~\ref{alg:DRP} can be characterized for subdifferentiable $g$ (see Chapter 3 in~\cite{Beck2017} for a discussion on subdifferentiability). When DRP converges, it converges to vectors $\xbmast \in \R^n$ that satisfy (see formal analysis in Appendix~\ref{Sup:Sec:Theorem1})
\begin{equation}
\zerobm \in \partial g(\xbmast) + \tau \Hbf^\Tsf\Hbf \Gsf(\xbmast)
\end{equation}
where $\partial g$ is the subdifferential of $g$ and $\Gsf$ is defined in Line 3 of Algorithm~\ref{alg:DRP}. As discussed in the next section, under additional assumptions, one can associate the fixed points of DRP with the stationary points of a composite objective function $f = g + h$ for some regularizer $h$.

\section{Convergence Analysis of DRP}
\label{sec:convergence_analysis}

In this section, we present a theoretical analysis of DRP. We first provide a more insightful interpretation of its solutions for restoration models that compute MMSE estimators of~\eqref{Eq:RestorationProblem}. We then discuss the convergence of the iterates generated by DRP. Our analysis will require several assumptions that act as sufficient conditions for our theoretical results.

We will consider restoration models that perform MMSE estimation of $\xbm \in \R^n$ for the problem~\eqref{Eq:RestorationProblem}
\begin{equation}
\label{Eq:MMSERestorator}
\Rsf(\sbm) = \E\left[\xbm | \sbm \right] = \int \xbm p_{\xbm|\sbm}(\xbm ; \sbm) \d \xbm = \int \xbm \frac{p_{\sbm|\xbm}(\sbm ; \xbm)p_\xbm (\xbm)}{p_\sbm(\sbm)} \d \xbm.
\end{equation}
where we used the probability density of the observation $\sbm \in \R^p$
\begin{equation}
\label{Eq:ObservationPDF}
p_\sbm(\sbm) = \int p_{\sbm | \xbm}(\sbm; \xbm) p_\xbm(\xbm) \d \xbm = \int G_\sigma(\sbm - \Hbf\xbm) p_\xbm(\xbm) \d \xbm.
\end{equation}
The function $G_\sigma$ in~\eqref{Eq:ObservationPDF} denotes the Gaussian density function with the standard deviation $\sigma > 0$. 

\begin{assumption}
\label{As:NonDegenerate}
The prior density $p_\xbm$ is non-degenerate over $\R^n$.
\end{assumption}
As a reminder, a probability density $p_\xbm$ is degenerate over $\R^n$, if it is supported on a space of lower dimensions than $n$. Our goal is to establish an explicit link between the MMSE restoration operator~\eqref{Eq:MMSERestorator} and the following regularizer
\begin{equation}
\label{Eq:ExpReg}
h(\xbm) = - \tau \sigma^2 \log p_\sbm (\Hbf \xbm), \quad \xbm \in \R^n,
\end{equation}
where $\tau$  is the parameter in Algorithm~\ref{alg:DRP}, $p_{\sbm}$ is the density of the observation~\eqref{Eq:ObservationPDF}, and $\sigma^2$ is the AWGN level used for training the restoration network. We adopt Assumption~\ref{As:NonDegenerate} to have a more intuitive mathematical exposition, but one can in principle generalize the link between MMSE operators and regularization beyond non-degenerate priors~\citep{Gribonval.Machart2013}. It is also worth observing that the function $h$ is infinitely continuously differentiable, since it is obtained by integrating $p_\xbm$ with a Gaussian density $G_\sigma$~\citep{Gribonval2011, Gribonval.Machart2013}. 

\begin{assumption}
\label{As:Proximable}
The scaled proximal operator $\sprox_{\gamma g}$ is well-defined in the sense that there exists a solution to the problem~\eqref{Eq:ScaledProx} for any $\zbm \in \R^n$. The function $g$ is subdifferentiable over $\R^n$.
\end{assumption}
This mild assumption is necessary for us to be able to run our method. There are multiple ways to ensure that the scaled proximal operator is well defined. For example, $\sprox_{\gamma g}$ is always well-defined for any $g$ that is proper, closed, and convex~\citep{Parikh.Boyd2014}. This directly makes DRP applicable with the popular least-squares data-fidelity term $g(\xbm) = \frac{1}{2}\|\ybm-\Abm\xbm\|_2^2$. One can relax the assumption of convexity by considering $g$ that is proper, closed, and coercive, in which case $\sprox_{\gamma g}$ will have a solution (see for example Chapter 6 of~\cite{Beck2017}). Note that we do not require the solution to~\eqref{Eq:ScaledProx} to be unique; it is sufficient for $\sprox_{\gamma g}$ to return one of the solutions.

We are now ready to theoretically characterize the solutions of DRP.
\begin{theorem}
\label{Thm:FixedPoints}
Let $\Rsf$ be the MMSE restoration operator~\eqref{Eq:MMSERestorator} corresponding to the restoration problem~\eqref{Eq:RestorationProblem} under Assumptions~\ref{As:NonDegenerate}-\ref{As:BoundedFromBelow}. Then, any fixed-point $\xbmast \in \R^n$ of DRP satisfies
\begin{equation*}
\zerobm \in \partial g(\xbmast) + \nabla h(\xbmast),
\end{equation*}
where $h$ is given in~\eqref{Eq:ExpReg}.
\end{theorem}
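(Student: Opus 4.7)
The plan is to combine the fixed-point characterization already stated just before the theorem with a generalized Tweedie-type identity that relates the MMSE restoration operator $\Rsf$ to the score $\nabla \log p_\sbm$. Under Assumptions~\ref{As:NonDegenerate}--\ref{As:Proximable} (plus the bounded-from-below assumption referenced in the theorem), a fixed point $\xbmast$ of DRP satisfies
\begin{equation*}
\zerobm \in \partial g(\xbmast) + \tau\, \Hbf^\Tsf \Hbf\, \Gsf(\xbmast),
\end{equation*}
so it suffices to prove the pointwise identity $\nabla h(\xbm) = \tau\, \Hbf^\Tsf \Hbf\, \Gsf(\xbm)$ for $h$ defined in~\eqref{Eq:ExpReg}, where $\Gsf(\xbm) = \xbm - \Rsf(\Hbf\xbm)$. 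Once that identity is in hand, the conclusion is immediate by substitution.

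The core step is a generalized Tweedie formula for the model~\eqref{Eq:RestorationProblem}. Starting from~\eqref{Eq:ObservationPDF}, I would differentiate $p_\sbm$ in $\sbm$, exchanging differentiation and integration, and use the explicit form of $G_\sigma$ to write
\begin{equation*}
\sigma^2 \nabla_\sbm p_\sbm(\sbm) \;=\; -\int (\sbm - \Hbf\xbm)\, G_\sigma(\sbm - \Hbf\xbm)\, p_\xbm(\xbm)\, \d\xbm.
\end{equation*}
Dividing through by $p_\sbm(\sbm)$ and recognizing the posterior density $p_{\xbm|\sbm}$ inside the integral, together with~\eqref{Eq:MMSERestorator}, yields
\begin{equation*}
\sigma^2\, \nabla \log p_\sbm(\sbm) \;=\; -\sbm + \Hbf\, \Rsf(\sbm).
\end{equation*}
Assumption~\ref{As:NonDegenerate} ensures $p_\sbm(\sbm) > 0$ everywhere so the score is well-defined, and the smoothness of $G_\sigma$ will make $h$ infinitely differentiable as noted in the paper.

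With the Tweedie identity established, I apply it at $\sbm = \Hbf \xbm$ and use the chain rule to obtain
\begin{equation*}
\nabla h(\xbm) \;=\; -\tau\sigma^2\, \Hbf^\Tsf \nabla \log p_\sbm(\Hbf \xbm) \;=\; \tau\, \Hbf^\Tsf\bigl(\Hbf\xbm - \Hbf\, \Rsf(\Hbf\xbm)\bigr) \;=\; \tau\, \Hbf^\Tsf \Hbf\, \Gsf(\xbm),
\end{equation*}
which is exactly the term appearing in the fixed-point inclusion. Combining this with the fixed-point condition produces $\zerobm \in \partial g(\xbmast) + \nabla h(\xbmast)$.

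The main technical obstacle will be rigorously justifying differentiation under the integral sign for the score computation, since the prior $p_\xbm$ need not have compact support and no a priori moment bounds are assumed. I would handle this by noting that the integrand is a product of $p_\xbm$ with a rapidly decaying Gaussian factor (and its gradient), so a standard dominated-convergence argument applies uniformly on compact subsets of $\sbm$; this is the same regularity-by-Gaussian-smoothing observation the authors invoke right after~\eqref{Eq:ExpReg}. The remaining bookkeeping—positivity of $p_\sbm$, measurability, and the identification of the posterior inside the integral—is routine given Assumption~\ref{As:NonDegenerate}. I would relegate these integrability details to the appendix and keep the main-text proof focused on the Tweedie derivation and the substitution step.
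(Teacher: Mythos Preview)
Your proposal is correct and follows essentially the same route as the paper: derive the fixed-point inclusion $\zerobm \in \partial g(\xbmast) + \tau\,\Hbf^\Tsf\Hbf\,\Gsf(\xbmast)$ from the optimality condition of the scaled proximal operator, establish the generalized Tweedie identity $\sigma^2\nabla\log p_\sbm(\sbm) = \Hbf\Rsf(\sbm)-\sbm$ by differentiating~\eqref{Eq:ObservationPDF}, and then use the chain rule at $\sbm=\Hbf\xbm$ to identify $\tau\,\Hbf^\Tsf\Hbf\,\Gsf(\xbm)$ with $\nabla h(\xbm)$. The only difference is that you are more explicit about justifying differentiation under the integral sign, which the paper treats formally.
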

The proof of the theorem is provided in the appendix and generalizes the well-known \emph{Tweedie's formula}~\citep{Robbins1956, Miyasawa1961, Gribonval2011} to restoration operators. The theorem implies that the solutions of DRP satisfy the first-order conditions for the objective function $f = g + h$. If $g$ is a negative log-likelihood $p_{\ybm | \xbm}$, then the fixed-points of DRP can be interpreted as \emph{maximum-a-posteriori probability (MAP)} solutions corresponding to the prior density $p_\sbm$. The density $p_\sbm$ is related to the true prior $p_\xbm$ through eq.~\eqref{Eq:ObservationPDF}, which implies that DRP has access to the prior $p_\xbm$ through the restoration operator $\Rsf$ via density $p_\sbm$. As $\Hbf \rightarrow \Ibf$ and $\sigma \rightarrow 0$, the density $p_\sbm$ approaches the prior distribution $p_\xbm$. 

The convergence analysis of DRP will require additional assumptions.
\begin{assumption}
\label{As:BoundedFromBelow}
The data-fidelity term $g$ and the implicit regularizer $h$ are bounded from below.
\end{assumption}
This assumption implies that there exists $f^\ast > -\infty$ such that $f(\xbm) \geq f^\ast$ for all $\xbm \in \R^n$.

\begin{assumption}
\label{As:LipschitzContinuity}
The function $h$ has a Lipschitz continutous gradient with constant $L > 0$. The degradation operator associated with the restoration network is such that $\lambda \succeq \Hbf^\Tsf\Hbf \succeq \mu > 0$.
\end{assumption}
This assumption is related to the implicit prior associated with a restoration model and is necessary to ensure the monotonic reduction of the objective $f$ by the DRP iterates. As stated under eq.~\eqref{Eq:ExpReg}, the function $h$ is infinitely continuously differentiable. We additionally adopt the standard optimization assumption that $\nabla h$ is Lipschitz continuous~\citep{Nesterov2004}. It is also worth noting that the positive definiteness of $\Hbf^\Tsf\Hbf$ in Assumption~\ref{As:LipschitzContinuity} is a relaxation of the traditional PnP assumption that the prior is a denoiser, which makes our theoretical analysis a significant extension of the prior work~\citep{bigdeli2017, Xu.etal2020, Kadkhodaie.Simoncelli2021, Gan.etal2023a}.

We are now ready to state the following results.
\begin{theorem}
\label{Thm:Convergence}
Run DRP for for $t \geq 1$ iterations under Assumptions~\ref{As:NonDegenerate}-\ref{As:LipschitzContinuity} using a step-size $\gamma = \mu/(\alpha L)$ with $\alpha > 1$. Then, for each iteration $1 \leq k \leq t$, there exists $\wbm(\xbm^k) \in \partial f(\xbm^k)$ such that
\begin{equation*}
\min_{1 \leq k \leq t}\|\wbm(\xbm^k)\|_2^2 \leq \frac{1}{t}\sum_{k = 1}^t \|\wbm(\xbm^k)\|_2^2 \leq \frac{C(f(\xbm^0)-f^\ast)}{t},
\end{equation*}
where $C > 0$ is an iteration independent constant.
\end{theorem}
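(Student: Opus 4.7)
The plan is a standard monotone-descent argument, adapted to the scaled proximal step and to the implicit regularizer $h$ identified in Theorem~\ref{Thm:FixedPoints}. The crucial identity, which follows from the Tweedie-type calculation behind Theorem~\ref{Thm:FixedPoints}, is
\begin{equation*}
\nabla h(\xbm) \;=\; \tau\, \Hbf^\Tsf\Hbf\, \Gsf(\xbm),
\end{equation*}
so the $\zbm^k$ update in Line~3 of Algorithm~\ref{alg:DRP} is not quite a gradient step on $h$: it uses $\tau\Gsf(\xbm)$ in place of $\nabla h(\xbm)=\tau\Hbf^\Tsf\Hbf\Gsf(\xbm)$. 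This mismatch is exactly compensated by the $\Hbf^\Tsf\Hbf$-weighted metric in the scaled proximal operator $\sprox_{\gamma g}$, which is where Assumption~\ref{As:LipschitzContinuity} ($\mu\Ibf\preceq\Hbf^\Tsf\Hbf\preceq\lambda\Ibf$) enters.

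\textbf{Step 1: A weighted three-point inequality.} First I would write the optimality of $\xbm^k$ for the scaled-prox objective as
\begin{equation*}
\tfrac{1}{2}\|\xbm^k-\zbm^k\|_{\Hbf^\Tsf\Hbf}^2+\gamma g(\xbm^k)\;\le\;\tfrac{1}{2}\|\xbm^{k-1}-\zbm^k\|_{\Hbf^\Tsf\Hbf}^2+\gamma g(\xbm^{k-1}).
\end{equation*}
Plugging in $\zbm^k-\xbm^{k-1}=-\gamma\tau\Gsf(\xbm^{k-1})$, expanding the squares, and replacing $\tau\Hbf^\Tsf\Hbf\Gsf(\xbm^{k-1})$ by $\nabla h(\xbm^{k-1})$ gives
\begin{equation*}
g(\xbm^k)+\langle\nabla h(\xbm^{k-1}),\,\xbm^k-\xbm^{k-1}\rangle\;\le\;g(\xbm^{k-1})-\tfrac{1}{2\gamma}\|\xbm^k-\xbm^{k-1}\|_{\Hbf^\Tsf\Hbf}^2.
\end{equation*}

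\textbf{Step 2: Descent on $f$ and telescoping.} Next I would combine this with the $L$-smoothness descent lemma for $h$ (Assumption~\ref{As:LipschitzContinuity}) and use $\Hbf^\Tsf\Hbf\succeq\mu\Ibf$ to obtain
\begin{equation*}
f(\xbm^k)\;\le\;f(\xbm^{k-1})-\Bigl(\tfrac{\mu}{2\gamma}-\tfrac{L}{2}\Bigr)\|\xbm^k-\xbm^{k-1}\|_2^2.
\end{equation*}
The choice $\gamma=\mu/(\alpha L)$ with $\alpha>1$ makes the bracket equal to $L(\alpha-1)/2>0$, giving monotone decrease of $f$. Telescoping from $k=1$ to $t$ and invoking $f(\xbm^t)\ge f^\ast$ (Assumption~\ref{As:BoundedFromBelow}) yields
\begin{equation*}
\sum_{k=1}^{t}\|\xbm^k-\xbm^{k-1}\|_2^2\;\le\;\frac{2}{L(\alpha-1)}\bigl(f(\xbm^0)-f^\ast\bigr).
\end{equation*}

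\textbf{Step 3: From step sizes to subgradients.} Finally, from the first-order condition of the scaled prox I would read off an element
\begin{equation*}
\wbm(\xbm^k)\;=\;\wbm_g^k+\nabla h(\xbm^k)\;\in\;\partial f(\xbm^k),\quad \text{with}\quad \wbm_g^k\;=\;-\nabla h(\xbm^{k-1})-\tfrac{1}{\gamma}\Hbf^\Tsf\Hbf(\xbm^k-\xbm^{k-1})\;\in\;\partial g(\xbm^k).
\end{equation*}
Hence
\begin{equation*}
\wbm(\xbm^k)\;=\;\bigl(\nabla h(\xbm^k)-\nabla h(\xbm^{k-1})\bigr)-\tfrac{1}{\gamma}\Hbf^\Tsf\Hbf(\xbm^k-\xbm^{k-1}),
\end{equation*}
so $L$-smoothness of $h$ and $\|\Hbf^\Tsf\Hbf\|\le\lambda$ give $\|\wbm(\xbm^k)\|_2\le(L+\lambda/\gamma)\|\xbm^k-\xbm^{k-1}\|_2$. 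Squaring, summing, and combining with the telescoping bound produces the desired inequality with $C=2(L+\lambda/\gamma)^2/(L(\alpha-1))$, and then $\min_k\|\wbm(\xbm^k)\|_2^2$ is bounded by the average.

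\textbf{Main obstacle.} The only nontrivial point, aside from bookkeeping with the two different norms, is justifying the Tweedie-like identity $\nabla h=\tau\Hbf^\Tsf\Hbf\Gsf$ rigorously; this rests on differentiating under the integral in~\eqref{Eq:ObservationPDF} (enabled by Assumption~\ref{As:NonDegenerate} and the smoothness of $G_\sigma$) and is already the content of Theorem~\ref{Thm:FixedPoints}. Given that identity, everything else is a careful replay of the classical non-convex proximal-gradient analysis in the $\Hbf^\Tsf\Hbf$-weighted geometry, with the positive definiteness lower bound $\mu>0$ playing the role that was trivially $\mu=1$ in the denoiser case.
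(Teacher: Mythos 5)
Your proposal is correct and follows essentially the same route as the paper's own proof: the same sufficient-decrease inequality obtained from the scaled-prox optimality (your value-comparison in Step 1 is equivalent to the paper's evaluation of its surrogate $\varphi$ at $\xbm^k$ and $\xbm^{k-1}$), the same descent lemma for $h$, the same subgradient element $\wbm(\xbm^k)=\bigl(\nabla h(\xbm^k)-\nabla h(\xbm^{k-1})\bigr)-\tfrac{1}{\gamma}\Hbf^\Tsf\Hbf(\xbm^k-\xbm^{k-1})$, and an identical constant $C=2L\bigl(1+\alpha\lambda/\mu\bigr)^2/(\alpha-1)$. As a minor bonus, your sign in the inclusion $\wbm_g^k\in\partial g(\xbm^k)$ is actually the careful one (the paper's displayed inclusion has a harmless sign slip that does not affect the norm bound).
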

The exact expression for the constant $C$ is given in the proof. Theorem~\ref{Thm:Convergence} shows that the iterates generated by DRP satisfy $\wbm(\xbm^k) \rightarrow \zerobm$ as $t \rightarrow \infty$. Theorems~\ref{Thm:FixedPoints} and~\ref{Thm:Convergence} do not explicitly require convexity or smoothness of $g$, and non-expansiveness of $\Rsf$. They can thus be viewed as a major generalization of the existing theory from denoisers to more general restoration operators.

\section{Numerical Results}
We now numerically validate DRP on several distinct inverse problems. Due to space limitations in the main paper, we have included several additional numerical results in the appendix. 

We consider two inverse problems of form $\ybm=\Abm\xbm + \ebm$: (a) \emph{Image Deblurring} and (b) \emph{Single Image Super Resolution (SISR)}. For both problems, we assume that $\ebm$ is the additive white Gaussian noise (AWGN). We adopt the traditional $\ell_2$-norm loss as the data-fidelity term in \eqref{equ:optimization} for both problems. We use the Peak Signal-to-Noise Ratio (PSNR) for quantitative performance evaluation.

In the main manuscript, we compare DRP with several variants of denoiser-based methods, including SD-RED~\citep{Romano.etal2017}, PnP-ADMM~\citep{Chan.etal2016}, IRCNN~\citep{Zhang.etal2017a}, and DPIR~\citep{Zhang.etal2022}. SD-RED and PnP-ADMM refer to the steepest-descent variant of RED and the ADMM variant of PnP, both of which incorporate AWGN denoisers based on DnCNN~\citep{Zhang.etal2017}. IRCNN and DPIR are based on half-quadratic splitting (HQS) iterations that use the IRCNN and the DRUNet denoisers, respectively.

In the appendix, we present several additional comparisons, namely: (a) evaluation of the performance of DRP on the task of image denoising; (b) additional comparison of DRP with the recent provably convergent variant of PnP called gradient-step plug-and-play (GS-PnP)~\citep{Hurault.etal2022}; (c) comparison of DRP with the diffusion posterior sampling (DPS)~\citep{chung2023diffusion} method that uses a denoising diffusion model as a prior; and (d) illustration of the improvement of DRP using SwinIR as a prior over the direct application of SwinIR on SR using the Gaussian kernel.

\subsection{Swin Transformer based Super Resolution Prior}
\label{subsec:SwinIR}
\textbf{Super Resolution Network Architecture.} We pre-trained a $q\times$ super resolution model $\Rsf_q$ using the SwinIR~\citep{liang2021swinir} architecture based on Swin Transformer. Our training dataset comprised both the DIV2K~\citep{Agustsson_2017_CVPR_Workshops} and Flick2K~\citep{lim2017enhanced} dataset, containing 3450 color images in total. During training, we applied $q\times$ bicubic downsampling to the input images with AWGN characterized by  standard deviation $\sigma$ randomly chosen in [0, 10/255]. We used three SwinIR SR models, each trained for different down-sampling factors: $2\times$, $3\times$ and $4\times$. 

\begin{figure}[t]
    \centering
    \includegraphics[width=.95\textwidth]{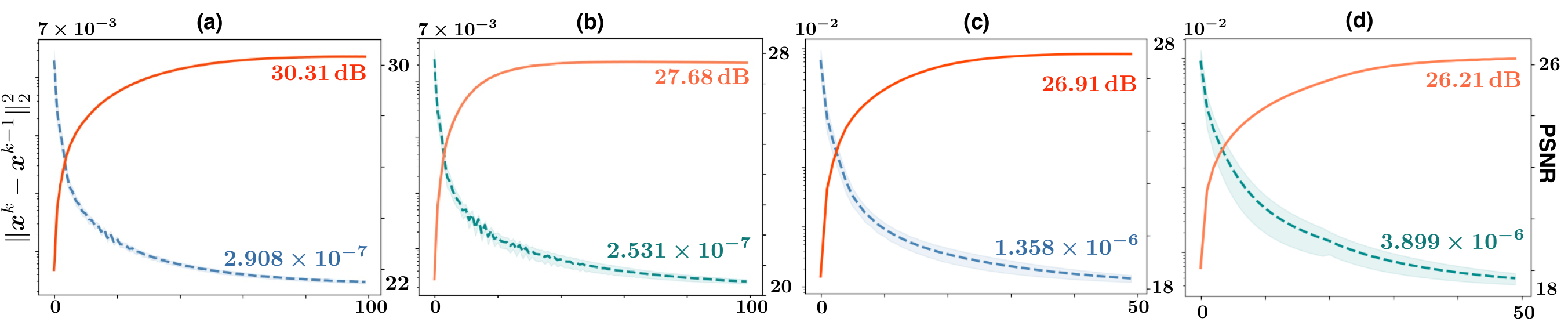}
    \caption{Illustration of the convergence behaviour of DRP for image deblurring and single image super resolution on the Set3c dataset.~\emph{\textbf{(a)-(b)}}: Deblurring with Gaussian blur kernels of standard deviations 1.6 and 2.0.~\emph{\textbf{(c)-(d)}}: $2\times$ and $3\times$ super resolution with the Gaussian blur kernel of standard deviation 2.0. Average  distance $\Vert\xbm^{k} - \xbm^{k-1} \Vert^2_2$ and PSNR relative to the groundtruth are plotted, with shaded areas indicating the standard deviation of these metrics across all test images.}
    \label{fig:convergence_curve}
\end{figure}
\begin{table}[t]\small
\centering
\renewcommand\arraystretch{1.1}
\setlength{\tabcolsep}{4.8pt}

\begin{center}
   
\begin{tabular}{cc|ccccc} 
\hline
Kernel & Datasets & SD-RED & PnP-ADMM & IRCNN+ & DPIR & DRP \\ \hline
 & Set3c & 27.14 & 29.11 & 28.14 & \underline{\color[HTML]{00008A}29.53} & \textbf{\color[HTML]{D52815} 30.69} \\
 & Set5 & 29.78 & 32.31 & 29.46 & \underline{\color[HTML]{00008A}32.38} & \textbf{\color[HTML]{D52815} 32.79} \\
 & CBSD68 & 25.78 & 28.90 & 26.86 & \underline{\color[HTML]{00008A}28.86} & \textbf{\color[HTML]{D52815} 29.10} \\
\multirow{-5}{*}{\raisebox{-1.1\height}{\includegraphics[width=0.05\textwidth]{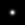}}} & McMaster & 29.69 & 32.20 & 29.15 & \underline{\color[HTML]{00008A}32.42} & \textbf{\color[HTML]{D52815} 32.79} \\ \hline
 & Set3c & 25.83 & 27.05 & 26.58 & \underline{\color[HTML]{00008A}27.52} & \textbf{\color[HTML]{D52815}27.89} \\
 & Set5 & 28.13 & 30.77 & 28.75 & \underline{\color[HTML]{00008A}30.94} & \textbf{\color[HTML]{D52815}31.04} \\
 & CBSD68 & 24.43 & 27.45 & 25.97 & \textbf{\color[HTML]{D52815}27.52} & \underline{\color[HTML]{00008A}27.46} \\
\multirow{-5}{*}{\raisebox{-1.1\height}{\includegraphics[width=0.05\textwidth]{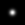}}} & McMaster & 28.71 & 30.50 & 28.27 & \underline{\color[HTML]{00008A}30.78} & \textbf{\color[HTML]{D52815}30.79} \\ \hline
\end{tabular}
\end{center}

\caption{PSNR (dB) of DRP and several SOTA methods for solving inverse problems using denoisers on image deblurring with the Gaussian blur kernels of standard deviation 1.6 and 2.0 on Set3c, Set5, CBSD68 and McMaster datasets. The \textbf{\color[HTML]{D52815}best} and \underline{\color[HTML]{00008A} second best} results are highlighted. Note how DRP can outperform SOTA PnP methods that use denoisers as priors.}
\label{table:deblur_results}
\end{table}

\textbf{Prior Refinement Strategy for the Super Resolution prior.} Theorem~\ref{Thm:FixedPoints} suggests that as $\Hbf \rightarrow \Ibf$, the prior in DRP converges to $p_\xbm$. This process can be approximated for SwinIR by controlling the down-sampling factor $q$ of the SR restoration prior $\Rsf_q(\cdot)$. We observed through our numerical experiments that gradual reduction of $q$ leads to less reconstruction artifacts and enhanced fine details. We will denote the approach of gradually reducing $q$ as \emph{prior refinement strategy}. We initially set $q$ to a larger down-sampling factor, which acts as a more aggressive prior; we then reduce $q$ to a smaller value leading to preservation of finer details. This strategy is conceptually analogous to the gradual reduction of $\sigma$ in the denoiser in the SOTA PnP methods such as DPIR~\citep{Zhang.etal2022}.

\subsection{Image Deblurring}
\label{subsec:deblurring}

Image deblurring is based on the degradation operator of the form $\Abm =\Kbm$, where $\Kbm$ is a convolution with the blur kernel $\kbm$. We consider image deblurring using two $25 \times 25$ Gaussian kernels (with the standard deviations 1.6 and 2) used in~\cite{Zhang.etal2019}, and the AWGN vector $\ebm$ corresponding to noise level of 2.55/255. The restoration model used as a prior in DRP is SwinIR introduced in Section~\ref{subsec:SwinIR}, so that the operation $\Hbf$ corresponds to the standard bicubic downsampling. The scaled proximal operator $\sprox_{\lambda g}$ in (\ref{Eq:ScaledProx}) with data-fidelity term $g (\xbm) = \frac{1}{2}\norm{\ybm -\Kbm\xbm}_2^2$ can be written as 
\begin{equation}
\label{Eq:ScaledProx_deblur_cg}
\sprox_{\gamma g}(\zbm) = (\Kbm^\Tsf\Kbm + \gamma\Hbf^\Tsf\Hbf)^{-1} [\Kbm^\Tsf\ybm + \gamma\Hbf^\Tsf\Hbf\zbm].
\end{equation}
We adopt a standard approach of using a few iterations of the conjugate gradient (CG) method (see for example~\cite{Aggarwal.etal2019}) to implement the scaled proximal operator (\ref{Eq:ScaledProx_deblur_cg}) by avoiding the direct inversion of $(\Kbm^\Tsf\Kbm + \gamma\Hbf^\Tsf\Hbf)$. In each DRP iteration, we run three steps of a CG solver, starting from a warm initialization from the previous DRP iteration. We fine-turned the hyper-parameter $\gamma$, $\tau$ and SR restoration prior rate $q$ to achieve the highest PSNR value on the Set5 dataset and then apply the same configuration to the other three datasets. 

\begin{figure}[t]
    \centering
    \includegraphics[width=.985\textwidth]{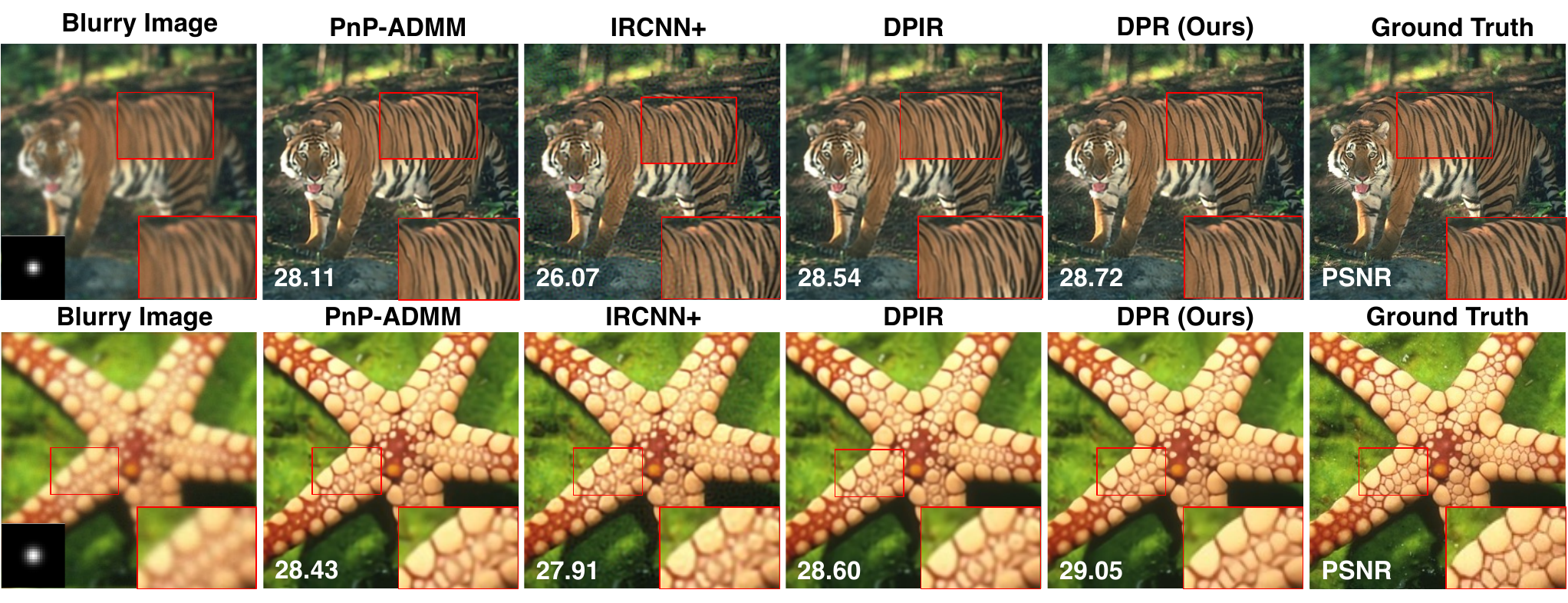}
    \caption{Visual comparison of DRP with several well-known methods on Gaussian deblurring of color images. The top row shows results for a blur kernel with a standard deviation (std) of 1.6, while the bottom row shows results for another blur kernel with std = 2. The squares at the bottom-left corner of blurry images show the blur kernels. Each image is labeled by its PSNR in dB with respect to the original image. The visual differences are highlighted in the bottom-right corner. Note how DRP using restoration prior improves over SOTA methods based on denoiser priors.}
    \label{fig:deblur_results}
\end{figure}

\begin{figure}[t]
    \centering
    \includegraphics[width=.955\textwidth]{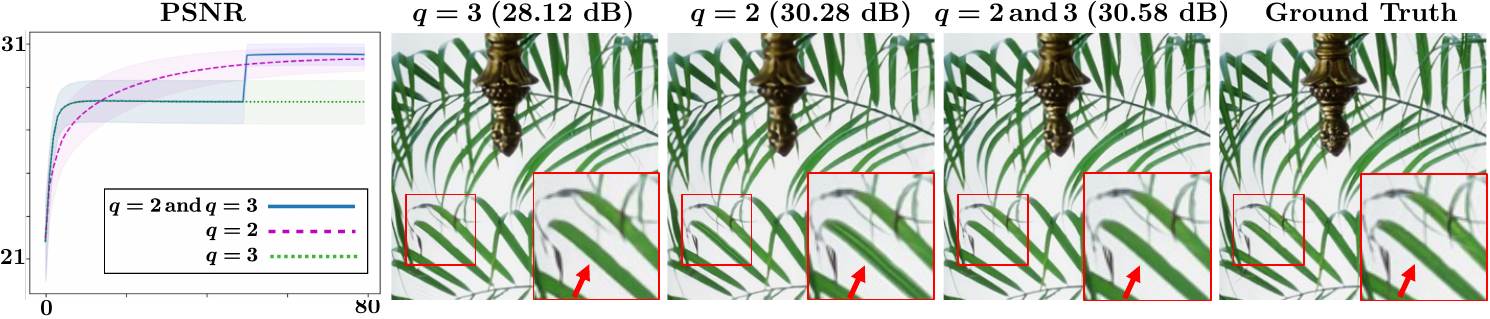}
    \caption{Illustration of the impact of different SR factors in the prior used within DRP for image deblurring. We show three scenarios: (i) using only $3 \times$ prior, (ii) using only $2 \times$ prior, and (iii) the use of the \emph{prior refinement strategy}, which combines both the $2 \times$ and $3 \times$ priors. \textbf{\emph{Left}}: Convergence of PSNR against the iteration number for all three configurations. \textbf{\emph{Right}}: Visual illustration of the final image for each setting. The visual difference is highlighted by the red arrow in the zoom-in box. Note how the reduction of $q$ can lead to about 0.3 dB improvement in the final performance.}
    \label{fig:convergence_results}
\end{figure}

Figure~\ref{fig:convergence_curve} (a)-(b) illustrates the convergence behaviour of DRP on the Set3c dataset for two blur kernels. Table~\ref{table:deblur_results} presents the quantitative evaluation of the reconstruction performance on two different blur kernels, showing that DRP outperforms the baseline methods across four widely-used datasets.  Figure~\ref{fig:deblur_results} visually illustrates the reconstructed results on the same two blur kernels. Note how DRP can reconstruct the fine details of the tiger and starfish, as highlighted within the zoom-in boxes, while all the other baseline methods yield either oversmoothed reconstructions or noticeable artifacts. These results show that DRP can leverage SwinIR as an implicit prior, which not only ensures stable convergence, but also leads to competitive performance when compared to denoisers priors.

Figure~\ref{fig:convergence_results} illustrates the impact of the \emph{prior-refinement strategy} described in Section~\ref{subsec:SwinIR}. We compare three settings: (i) use of only $3 \times$ prior, (ii) use of only $2 \times$ prior, and (iii) use of the prior-refinement strategy to leverage both $3 \times$ and $2 \times$ priors. The subfigure on the left shows the convergence of DRP for each configuration, while the ones on the right show the final imaging quality. Note how the reduction of $q$ leads to better performance, which is analogous to what was observed with the reduction of $\sigma$ in the SOTA PnP methods~\citep{Zhang.etal2022}.

\subsection{Single Image Super Resolution}

We apply DRP using the bicubic SwinIR prior to Single Image Super Resolution (SISR) task. The measurement operator in SISR can be written as $\Abm =\Sbm\Kbm$, where $\Kbm$ is convolution with the blur kernel $\kbm$ and $\Sbm$ performs standard $d$-fold down-sampling with $d^2 = n/m$. The scaled proximal operator $\sprox_{\lambda g}$ in (\ref{Eq:ScaledProx}) with data-fidelity term $g (\xbm) = \frac{1}{2}\norm{\ybm -\Sbm\Kbm\xbm}_2^2$ can be write as: 
\begin{equation}
\label{Eq:ScaledProx_sisr_cg}
\sprox_{\gamma g}(\zbm) = (\Kbm^\Tsf\Sbm^\Tsf\Sbm\Kbm + \gamma\Hbf^\Tsf\Hbf)^{-1} [\Kbm^\Tsf\Sbm^\Tsf\ybm + \gamma\Hbf^\Tsf\Hbf\zbm],
\end{equation}
where $\Hbf$ is the bicubic downsampling operator.
Similarly to deblurring in Section~\ref{subsec:deblurring}, we use CG to efficiently compute~\eqref{Eq:ScaledProx_sisr_cg}. We adjust the hyper-parameter $\gamma$, $\tau$, and the SR restoration prior factor $q$ for the best PSNR performance on Set5, and then use these parameters on the remaining datasets.

\begin{table}[t]\small
\centering
\setlength{\tabcolsep}{2.5mm}{} 
\renewcommand\arraystretch{1.1}
\begin{tabular}{ccc|ccccc}
\hline
SR & \multicolumn{1}{c|}{Kernel} & \multicolumn{1}{c|}{ Datasets} & SD-RED & PnP-ADMM & IRCNN+ & DPIR & DRP \\ \hline
 & \multicolumn{1}{c|}{} & Set3c & 27.01 & 27.88 & 27.48 & \underline{\color[HTML]{00008A}28.18} & \textbf{\color[HTML]{D52815} 29.26} \\
 & \multicolumn{1}{c|}{} & Set5 & 28.98 & 31.41 & 29.47 & \underline{\color[HTML]{00008A}31.42} & \textbf{\color[HTML]{D52815} 31.47} \\
 & \multicolumn{1}{c|}{} & CBSD68 & 26.11 & 28.00 & 26.66 & \underline{\color[HTML]{00008A}27.97} & \textbf{\color[HTML]{D52815} 28.12} \\
 & \multicolumn{1}{c|}{\multirow{-5}{*}{\raisebox{-1.1\height}{\includegraphics[width=0.05\textwidth]{figures/kernel_1.6.png}}}} & McMaster & 28.70 & 30.98 & 29.11 & \underline{\color[HTML]{00008A}31.16} & \textbf{\color[HTML]{D52815} 31.39} \\ \cline{2-8} 
 & \multicolumn{1}{c|}{} & Set3c & 25.20 & 25.86 & 25.92 & \underline{\color[HTML]{00008A}26.80} & \textbf{\color[HTML]{D52815} 27.41} \\
 & \multicolumn{1}{c|}{} & Set5 & 28.57 & 30.06 & 28.91 & \underline{\color[HTML]{00008A}30.36} & \textbf{\color[HTML]{D52815} 30.42} \\
 & \multicolumn{1}{c|}{} & CBSD68 & 25.77 & 26.88 & 26.06 & \underline{\color[HTML]{00008A}26.98} & \textbf{\color[HTML]{D52815} 26.98} \\
\multirow{-10}{*}{\raisebox{-3.1\height}{$2\times$} } & \multicolumn{1}{c|}{\multirow{-5}{*}{\raisebox{-1.1\height}{\includegraphics[width=0.05\textwidth]{figures/kernel_2.0.png}}}} & McMaster & 28.15 & 29.53 & 28.41 & \underline{\color[HTML]{00008A}29.87} & \textbf{\color[HTML]{D52815} 30.03} \\ \hline
 & \multicolumn{1}{l|}{} & Set3c & 25.50 & 25.85 & 25.72 & \underline{\color[HTML]{00008A}26.64} & \textbf{\color[HTML]{D52815} 27.77} \\
 & \multicolumn{1}{l|}{} & Set5 & 28.75 & 30.09 & 29.14 & \underline{\color[HTML]{00008A}30.39} & \textbf{\color[HTML]{D52815} 30.83} \\
 & \multicolumn{1}{l|}{} & CBSD68 & 25.69 & 26.78 & 26.01 & \underline{\color[HTML]{00008A}26.80} & \textbf{\color[HTML]{D52815} 27.18} \\
 & \multicolumn{1}{c|}{\multirow{-5}{*}{\raisebox{-1.1\height}{\includegraphics[width=0.05\textwidth]{figures/kernel_1.6.png}}}} & McMaster & 28.38 & 29.52 & 28.53 & \underline{\color[HTML]{00008A}29.82} & \textbf{\color[HTML]{D52815} 29.92} \\ \cline{2-8} 
 & \multicolumn{1}{c|}{} & Set3c & 24.55 & 24.87 & 24.87 & \underline{\color[HTML]{00008A}25.84} & \textbf{\color[HTML]{D52815} 26.84} \\
 & \multicolumn{1}{l|}{} & Set5 & 28.19 & 29.26 & 28.37 & \underline{\color[HTML]{00008A}29.70} & \textbf{\color[HTML]{D52815} 29.88} \\
 & \multicolumn{1}{l|}{} & CBSD68 & 25.40 & 26.28 & 25.56 & \underline{\color[HTML]{00008A} 26.39} & \textbf{\color[HTML]{D52815} 26.60} \\
\multirow{-10}{*}{\raisebox{-3.1\height}{$3\times$}} & \multicolumn{1}{c|}{\multirow{-5}{*}{\raisebox{-1.1\height}{\includegraphics[width=0.05\textwidth]{figures/kernel_2.0.png}}}} & McMaster & 27.79 & 28.72 & 27.85 & \underline{\color[HTML]{00008A}29.11} & \textbf{\color[HTML]{D52815} 29.47} \\ \hline
\end{tabular}
\caption{PSNR (dB) comparison of DRP and several baselines for SISR on Set3c, Set5, CBSD68 and McMaster datasets. The \textbf{\color[HTML]{D52815}best} and \underline{\color[HTML]{00008A} second best} results are highlighted. Note the excellent quantitative performance of DRP, which suggests the potential of using general restoration models as priors. }
\label{table:sisr_results}

\end{table}
\begin{figure}[t]
    \centering
    \includegraphics[width=.985\textwidth]{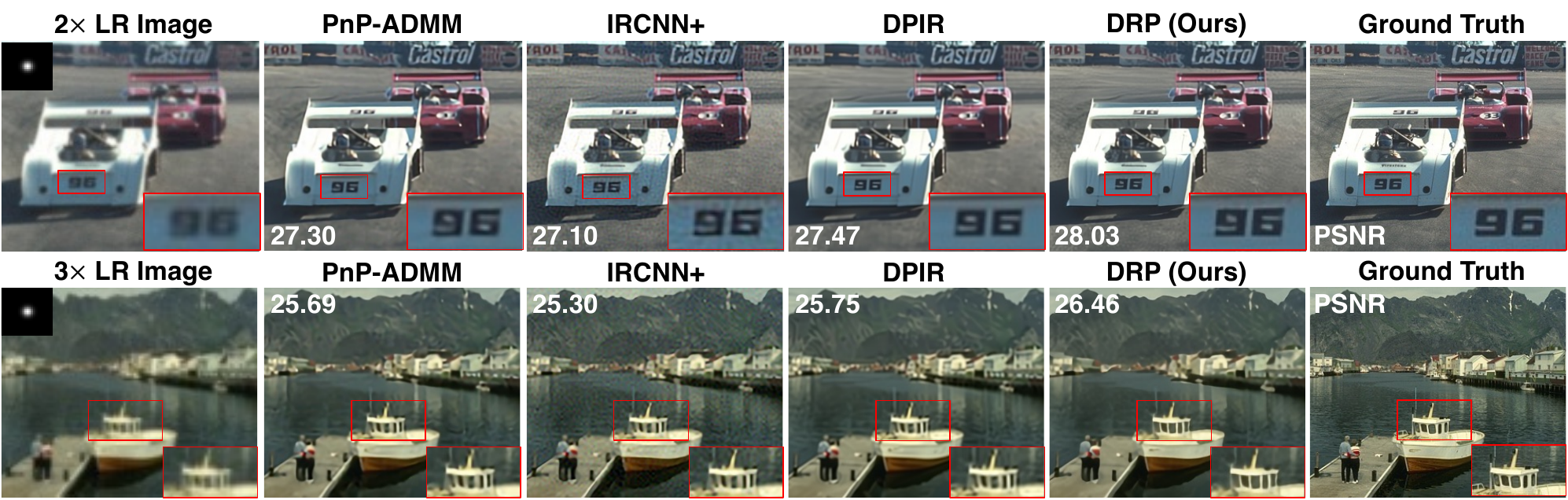}
    \caption{Visual comparison of DRP and several well-known methods on single image super resolution. The top row displays performances for $2\times$ SR, while the bottom row showcases results for $3\times$ SR. The lower-left corner of each low-resolution image shows the blur kernels. Each image is labeled by its PSNR in dB with respect to the original image. The visual differences are highlighted by the boxes in the bottom-right corner. Note the excellent performance of the proposed DRP method using the SwinIR prior both visually and in terms of PSNR.}
    \label{fig:sisr_results}
\end{figure}

We evaluate super-resolution performance across two $25\times25$ Gaussian blur kernels, each with distinct standard deviations (1.6 and 2.0), and for two distinct downsampling factors ($2\times$ and $3\times$), incorporating an AWGN vector $\ebm$ corresponding to noise level of 2.55/255.

Figure~\ref{fig:convergence_curve} (c)-(d) illustrates the convergence behaviour of DRP on the Set3c dataset for $2\times$ and $3\times$ SISR. Figure~\ref{fig:sisr_results} shows the visual reconstruction results for the same downsampling factors. Table~\ref{table:sisr_results} summarizes the
PSNR values achieved by DRP relative to other baseline methods when applied to different blur kernel and downsampling factors on four commonly used datasets.

It is worth highlighting that the SwinIR model used in DRP was pre-trained for the bicubic super-resolution task. Consequently, the direct application of the pre-trained SwinIR to the setting considered in this section leads to the suboptimal performance due to mismatch between the kernels used. See Appendix~\ref{Sup:Sec:SwinIR} to see how DRP improves over the direct application of SwinIR.

\section{Conclusion}

The work presented in this paper proposes a new DRP method for solving imaging inverse problems by using pre-trained restoration operators as priors, presents its theoretical analysis in terms of convergence, and applies the method to two well-known inverse problems. The proposed method and its theoretical analysis extend the recent work using denoisers as priors by considering more general restoration operators. The numerical validation of DRP shows the improvements due to the use of learned SOTA super-resolution models. One conclusion of this work is the potential effectiveness of going beyond priors specified by traditional denoisers.

\section*{Limitations}

The work presented in this paper comes with several limitations. The proposed DRP method uses pre-trained restoration models as priors, which means that its performance is inherently limited by the quality of the pre-trained model. As shown in this paper, pre-trained restoration models provide a convenient, principled, and flexible mechanism to specify priors; yet, they are inherently self-supervised and their empirical performance can thus be suboptimal compared to priors trained in a supervised fashion for a specific inverse problem. Our theory is based on the assumption that the restoration prior used for inference performs MMSE estimation. While this assumption is reasonable for deep networks trained using the MSE loss, it is not directly applicable to denoisers trained using other common loss functions, such as the $\ell_1$-norm or SSIM. Finally, as is common with most theoretical work, our theoretical conclusions only hold when our assumptions are satisfied, which might limit their applicability in certain settings. Our future work will continue investigating ways to extend our theory by exploring alternative strategies for relaxing our assumptions.

\bibliography{iclr2023_conference}
\bibliographystyle{iclr2023_conference}

\newpage
\appendix

\section{Theoretical Analysis of DRP}

\subsection{Proof of Theorem~\ref{Thm:FixedPoints}}
\label{Sup:Sec:Theorem1}

\begin{theorem*}
Let $\Rsf$ be the MMSE restoration operator~\eqref{Eq:MMSERestorator} corresponding to the restoration problem~\eqref{Eq:RestorationProblem} under Assumptions~\ref{As:NonDegenerate}-\ref{As:BoundedFromBelow}. Then, any fixed-point $\xbmast \in \R^n$ of DRP satisfies
\begin{equation*}
\zerobm \in \partial g(\xbmast) + \nabla h(\xbmast),
\end{equation*}
where $h$ is given in~\eqref{Eq:ExpReg}.
\end{theorem*}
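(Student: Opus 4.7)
The plan is to unpack the fixed-point condition of the DRP iteration and then identify the resulting expression with $\nabla h$ via a Tweedie-type identity adapted to the degraded-observation setting.

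First, I would invoke the optimality condition of the scaled proximal operator. At a fixed point $\xbmast$, Line 4 of Algorithm~\ref{alg:DRP} gives $\xbmast = \sprox_{\gamma g}\bigl(\xbmast - \gamma\tau\,\Gsf(\xbmast)\bigr)$, so the first-order condition for the minimization defining $\sprox_{\gamma g}$ in~\eqref{Eq:ScaledProx} (which is valid under Assumption~\ref{As:Proximable}) yields $\zerobm \in \Hbf^\Tsf\Hbf\bigl(\xbmast - (\xbmast - \gamma\tau\,\Gsf(\xbmast))\bigr) + \gamma\,\partial g(\xbmast)$. Dividing by $\gamma$ reduces the claim to verifying the pointwise identity $\tau\,\Hbf^\Tsf\Hbf\,\Gsf(\xbm) = \nabla h(\xbm)$ for all $\xbm \in \R^n$, with $h(\xbm) = -\tau\sigma^2 \log p_\sbm(\Hbf\xbm)$.

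The main step — and the one I expect to be the only delicate part — is the generalization of Tweedie's formula to the operator $\Rsf$ defined in~\eqref{Eq:MMSERestorator}. The key observation is that, although $\Rsf(\sbm) = \E[\xbm\mid\sbm]$ is an estimate of the \emph{clean} image $\xbm$, the Gaussian noise directly corrupts $\Hbf\xbm$, so Tweedie's identity should be derived for $\Hbf\,\Rsf(\sbm) = \E[\Hbf\xbm\mid\sbm]$ rather than for $\Rsf$ itself. Differentiating~\eqref{Eq:ObservationPDF} under the integral sign (legitimate because $p_\xbm$ is non-degenerate by Assumption~\ref{As:NonDegenerate} and $G_\sigma$ together with its gradient has a dominating Gaussian tail) and applying $\nabla_\sbm G_\sigma(\sbm - \Hbf\xbm) = -\sigma^{-2}(\sbm - \Hbf\xbm)\,G_\sigma(\sbm - \Hbf\xbm)$ gives
\begin{equation*}
\sigma^2\,\nabla \log p_\sbm(\sbm) \;=\; -\sbm + \frac{1}{p_\sbm(\sbm)} \int \Hbf\xbm\, G_\sigma(\sbm - \Hbf\xbm)\, p_\xbm(\xbm)\,\d\xbm \;=\; \Hbf\,\Rsf(\sbm) - \sbm,
\end{equation*}
where the last equality uses the definition of $\Rsf$ and linearity of expectation. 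Evaluating at $\sbm = \Hbf\xbm$ and premultiplying by $\Hbf^\Tsf$ yields $\Hbf^\Tsf\Hbf\bigl(\xbm - \Rsf(\Hbf\xbm)\bigr) = -\sigma^2\,\Hbf^\Tsf \nabla \log p_\sbm(\Hbf\xbm)$.

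Finally, I would apply the chain rule to $h(\xbm) = -\tau\sigma^2\log p_\sbm(\Hbf\xbm)$ to obtain $\nabla h(\xbm) = -\tau\sigma^2\,\Hbf^\Tsf \nabla \log p_\sbm(\Hbf\xbm)$. Combining this with the preceding display and the definition $\Gsf(\xbm) = \xbm - \Rsf(\Hbf\xbm)$ gives the desired identification $\tau\,\Hbf^\Tsf\Hbf\,\Gsf(\xbm) = \nabla h(\xbm)$. Substituting into the fixed-point inclusion derived in the first step produces $\zerobm \in \partial g(\xbmast) + \nabla h(\xbmast)$, which is the conclusion. The only subtlety to address carefully is justifying differentiation under the integral for $p_\sbm$; Assumption~\ref{As:NonDegenerate} ensures $p_\sbm > 0$ everywhere so that $\log p_\sbm$ is well-defined and smooth, and the Gaussian convolution structure supplies the dominating function needed for the interchange.
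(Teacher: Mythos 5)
Your proposal is correct and follows essentially the same route as the paper's own proof: the optimality condition of the scaled proximal operator at a fixed point, the generalized Tweedie identity $\Hbf\,\Rsf(\sbm)-\sbm = \sigma^2\nabla\log p_\sbm(\sbm)$ obtained by differentiating~\eqref{Eq:ObservationPDF}, and the identification $\tau\,\Hbf^\Tsf\Hbf\,\Gsf(\xbm)=\nabla h(\xbm)$. Your explicit chain-rule step (premultiplying by $\Hbf^\Tsf$ and evaluating at $\sbm=\Hbf\xbm$) and the justification of differentiation under the integral are slightly more careful renderings of steps the paper states compactly, not a different argument.
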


\begin{proof}
First note that any fixed point $\xbmast \in \R^n$ of the DRP method can be expressed as
\begin{equation}
\label{Eq:Sup:OptimalitySProx}
\xbmast = \sprox_{\gamma g}(\xbmast-\gamma\tau\Gsf(\xbmast)) = \argmin_{\xbm \in \R^n} \left\{\frac{1}{2}\|\xbm-(\xbmast-\gamma\tau \Gsf(\xbmast))\|_{\Hbf^\Tsf\Hbf}^2 + \gamma g(\xbm)\right\},
\end{equation}
where we used the definition of the scaled proximal operator. From the optimality conditions of the scaled proximal operator, we then get
\begin{equation}
\zerobm \in \partial g(\xbmast) + \tau \Hbf^\Tsf\Hbf \Gsf(\xbmast).
\end{equation}

On the other hand, the gradient of $p_{\sbm}$, defined in~\eqref{Eq:ObservationPDF}, can be expressed as
\begin{equation}
\nabla_\sbm p_{\sbm}(\sbm) = \int \left(\frac{1}{\sigma^2}\left(\Hbf\xbm-\sbm\right)\right)G_\sigma(\sbm-\Hbf\xbm) p_\xbm(\xbm) \d \xbm = \frac{1}{\sigma^2}\left(\Hbf \Rsf(\sbm)-\sbm p_{\sbm}(\sbm)\right),
\end{equation}
where we used the gradient of the Gaussian density with respect to $\sbm$ and the definition of the MMSE restoration operator in eq.~\eqref{Eq:MMSERestorator}. By rearranging the terms, we obtain the following relationship
\begin{equation}
\label{Eq:Sup:GenTweedie1}
\Hbf\Rsf(\sbm)-\sbm = \sigma^2 \nabla \log p_\sbm(\sbm), \quad \sbm \in \R^p.
\end{equation} 
By using the definitions $\Gsf(\xbm) = \xbm - \Rsf(\Hbf\xbm)$ and $h(\xbm) = -\tau \sigma^2 \log p_\sbm(\Hbf\xbm)$, with $\xbm \in \R^n$, in~\eqref{Eq:Sup:GenTweedie1}, we obtain the following generalization of the well-known Tweedie's formula
\begin{equation}
\label{Eq:Sup:GenTweedie2}
\Hbf^\Tsf\Hbf \Gsf(\xbm) = \Hbf^\Tsf\Hbf \left(\xbm-\Rsf(\Hbf\xbm)\right) = -\sigma^2 \nabla \log p_{\sbm}(\Hbf\xbm)= \frac{1}{\tau} \nabla h(\xbm).
\end{equation}
By combining\eqref{Eq:Sup:OptimalitySProx} and~\eqref{Eq:Sup:GenTweedie2}, we directly obtain the desired result
\begin{equation*}
\zerobm \in \partial g(\xbmast) + \nabla h(\xbmast).
\end{equation*}
\end{proof}

\subsection{Proof of Theorem~\ref{Thm:Convergence}}
\label{Sup:Sec:ConvergenceProof}

\begin{theorem*}
Run DRP for for $t \geq 1$ iterations under Assumptions~\ref{As:NonDegenerate}-\ref{As:LipschitzContinuity} using a step-size $\gamma = \mu/(\alpha L)$ with $\alpha > 1$. Then, for each iteration $1 \leq k \leq t$, there exists $\wbm(\xbm^k) \in \partial f(\xbm^k)$ such that
\begin{equation*}
\min_{1 \leq k \leq t}\|\wbm(\xbm^k)\|_2^2 \leq \frac{1}{t}\sum_{k = 1}^t \|\wbm(\xbm^k)\|_2^2 \leq \frac{C(f(\xbm^0)-f^\ast)}{t},
\end{equation*}
where $C > 0$ is an iteration independent constant.
\end{theorem*}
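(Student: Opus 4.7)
The overall strategy is the usual proximal–gradient template for nonconvex problems: construct a subgradient element $\wbm(\xbm^k)\in\partial f(\xbm^k)$, bound its norm by the step difference $\|\xbm^k-\xbm^{k-1}\|_2$, prove a sufficient-decrease inequality $f(\xbm^{k-1}) - f(\xbm^k) \geq c \|\xbm^k-\xbm^{k-1}\|_2^2$, and telescope. The two things that make it non-standard are that the proximal step is with respect to the weighted norm $\|\cdot\|_{\Hbf^\Tsf\Hbf}$, and that the explicit ``gradient'' step actually applies $\tau\Gsf$ while the true gradient of $h$ is $\tau\Hbf^\Tsf\Hbf\Gsf$ (by the generalized Tweedie identity~\eqref{Eq:Sup:GenTweedie2}).

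\textbf{Step 1: Subgradient identification.} Writing out the optimality condition for $\xbm^k = \sprox_{\gamma g}(\zbm^k)$ with $\zbm^k=\xbm^{k-1}-\gamma\tau\Gsf(\xbm^{k-1})$ gives
\[
\tfrac{1}{\gamma}\Hbf^\Tsf\Hbf(\xbm^{k-1}-\xbm^k) - \tau\Hbf^\Tsf\Hbf\Gsf(\xbm^{k-1}) \;\in\; \partial g(\xbm^k).
\]
Using the identity $\tau\Hbf^\Tsf\Hbf\Gsf(\xbm^{k-1}) = \nabla h(\xbm^{k-1})$ from the proof of Theorem~\ref{Thm:FixedPoints} and adding $\nabla h(\xbm^k)$ to both sides, I obtain the candidate subgradient
\[
\wbm(\xbm^k) \;\defn\; \tfrac{1}{\gamma}\Hbf^\Tsf\Hbf(\xbm^{k-1}-\xbm^k) + \bigl(\nabla h(\xbm^k)-\nabla h(\xbm^{k-1})\bigr) \;\in\; \partial f(\xbm^k).
\]
Using $\Hbf^\Tsf\Hbf\preceq \lambda$ and the $L$-Lipschitz continuity of $\nabla h$ from Assumption~\ref{As:LipschitzContinuity}, the triangle inequality gives $\|\wbm(\xbm^k)\|_2 \leq (\lambda/\gamma + L)\|\xbm^k-\xbm^{k-1}\|_2$.

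\textbf{Step 2: Sufficient decrease.} Comparing the optimal value at $\xbm^k$ of the scaled-prox subproblem with its value at $\xbm^{k-1}$ and expanding $\xbm^k-\zbm^k$ and $\xbm^{k-1}-\zbm^k$ in terms of $\dbm\defn\xbm^k-\xbm^{k-1}$ and $\gamma\tau\Gsf(\xbm^{k-1})$, I get
\[
g(\xbm^{k-1}) - g(\xbm^k) \;\geq\; \tfrac{1}{2\gamma}\|\dbm\|_{\Hbf^\Tsf\Hbf}^2 + \langle \dbm,\nabla h(\xbm^{k-1})\rangle,
\]
where the inner-product term comes from $\Hbf^\Tsf\Hbf(\gamma\tau\Gsf(\xbm^{k-1})) = \gamma\nabla h(\xbm^{k-1})$. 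Combining with the descent lemma for $h$,
\[
h(\xbm^k) \;\leq\; h(\xbm^{k-1}) + \langle \nabla h(\xbm^{k-1}),\dbm\rangle + \tfrac{L}{2}\|\dbm\|_2^2,
\]
and using $\Hbf^\Tsf\Hbf\succeq \mu I$ to lower-bound the weighted norm, I obtain
\[
f(\xbm^{k-1}) - f(\xbm^k) \;\geq\; \tfrac{1}{2}\bigl(\tfrac{\mu}{\gamma}-L\bigr)\|\dbm\|_2^2 \;=\; \tfrac{L(\alpha-1)}{2}\|\dbm\|_2^2,
\]
with the chosen step $\gamma = \mu/(\alpha L)$.

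\textbf{Step 3: Telescoping.} Summing the decrease inequality from $k=1$ to $t$ and using $f(\xbm^t)\geq f^\ast$ (Assumption~\ref{As:BoundedFromBelow}) gives $\sum_{k=1}^t \|\xbm^k-\xbm^{k-1}\|_2^2 \leq \frac{2}{L(\alpha-1)}(f(\xbm^0)-f^\ast)$. Feeding this into the bound from Step~1 yields
\[
\frac{1}{t}\sum_{k=1}^t \|\wbm(\xbm^k)\|_2^2 \;\leq\; \frac{C\,(f(\xbm^0)-f^\ast)}{t}, \qquad C = \frac{2\bigl(\lambda/\gamma + L\bigr)^{2}}{L(\alpha-1)},
\]
which, together with the trivial $\min_k(\cdot)\leq \mathrm{avg}_k(\cdot)$, finishes the proof.

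\textbf{Main obstacle.} The routine template for the Lipschitz-gradient case would give everything immediately; here the subtlety is that the iteration mixes the weighted norm $\|\cdot\|_{\Hbf^\Tsf\Hbf}$ coming from $\sprox$ with the Euclidean gradient $\nabla h$. The key trick, and the part to handle carefully, is to use the generalized Tweedie identity $\tau\Hbf^\Tsf\Hbf\Gsf = \nabla h$ to translate the update direction $\gamma\tau\Gsf(\xbm^{k-1})$ appearing in $\zbm^k$ into $\gamma\nabla h(\xbm^{k-1})$ after premultiplication by $\Hbf^\Tsf\Hbf$. This is precisely what makes the cross-term in the expansion of $\|\dbm+\gamma\tau\Gsf(\xbm^{k-1})\|_{\Hbf^\Tsf\Hbf}^2$ cancel cleanly against the linear term from the descent lemma, leaving the $\mu/\gamma - L$ coefficient that the choice $\gamma = \mu/(\alpha L)$ with $\alpha>1$ makes strictly positive.
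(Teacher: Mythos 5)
Your proposal is correct and follows essentially the same route as the paper's proof: the same surrogate comparison at $\xbm^k$ versus $\xbm^{k-1}$ combined with the descent lemma and the Tweedie identity $\tau\Hbf^\Tsf\Hbf\Gsf=\nabla h$, the same subgradient $\wbm(\xbm^k)$ from the prox optimality condition, and even the identical constant $C = 2L\left(\alpha\lambda/\mu+1\right)^2/(\alpha-1)$ after substituting $\gamma=\mu/(\alpha L)$. The only (cosmetic) difference is that your sign convention in the subgradient inclusion is slightly cleaner than the paper's, which is immaterial since only $\|\wbm(\xbm^k)\|_2$ enters the bound.
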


\begin{proof}
Consider the iteration $k \geq 1$ of DRP
\begin{equation*}
\xbm^k = \sprox_{\gamma g}\left(\xbm^{k-1}-\gamma \tau \Gsf(\xbm^{k-1})\right) \quad\text{with}\quad \Gsf \defn \xbm - \Rsf(\Hbf\xbm),
\end{equation*}
where $\Rsf$ is the MMSE restoration operator specified in~\eqref{Eq:MMSERestorator}. This implies that $\xbm^k$ minimizes
\begin{align*}
\varphi(\xbm) 
&\defn \frac{1}{2\gamma}(\xbm-\xbm^{k-1})^\Tsf\Hbf^\Tsf\Hbf(\xbm-\xbm^{k-1}) + \left[\tau\Hbf^\Tsf\Hbf\Gsf(\xbm^{k-1})\right]^\Tsf(\xbm-\xbm^{k-1}) + g(\xbm)\\
&=\frac{1}{2\gamma}(\xbm-\xbm^{k-1})^\Tsf\Hbf^\Tsf\Hbf(\xbm-\xbm^{k-1}) + \nabla h(\xbm^{k-1})^\Tsf(\xbm-\xbm^{k-1}) + g(\xbm),
\end{align*}
where in the second inequality we used eq.~\eqref{Eq:Sup:GenTweedie2} from the proof in Appendix~\ref{Sup:Sec:Theorem1}. By evaluating $\varphi$ at $\xbm^k$ and $\xbm^{k-1}$, we obtain the following useful inequality
\begin{equation}
\label{Eq:Sup:ProxBound}
g(\xbm^k) \leq g(\xbm^{k-1}) - \frac{1}{2\gamma} (\xbm^k-\xbm^{k-1})^\Tsf\Hbf^\Tsf\Hbf(\xbm^k-\xbm^{k-1}) - \nabla h(\xbm^{k-1})^\Tsf(\xbm^k-\xbm^{k-1}).
\end{equation}
On the other hand, from the $L$-Lipschitz continuity of $\nabla h$, we have the following bound
\begin{equation}
\label{Eq:Sup:LipBound}
h(\xbm^k) \leq h(\xbm^{k-1}) + \nabla h(\xbm^{k-1})^\Tsf(\xbm^k-\xbm^{k-1}) + \frac{L}{2}\|\xbm^k - \xbm^{k-1}\|_2^2.
\end{equation}
By combining eqs.~\eqref{Eq:Sup:ProxBound} and~\eqref{Eq:Sup:LipBound}, we obtain
\begin{align}
\nonumber f(\xbm^k) &\leq f(\xbm^{k-1}) - \frac{1}{2}(\xbm^k-\xbm^{k-1})^\Tsf\left[\frac{1}{\gamma}\Hbf^\Tsf\Hbf-L\Ibf\right](\xbm^k-\xbm^{k-1}) \\
\label{Eq:Sup:DescentBnd}&\leq f(\xbm^{k-1}) - (\alpha-1)\frac{L}{2} \|\xbm^k-\xbm^{k-1}\|_2^2,
\end{align}
where we used $\gamma = \mu/(\alpha L)$ with $\alpha > 1$ and $\mu > 0$ defined in Assumption~\ref{As:LipschitzContinuity}.

On the other hand, from the optimality conditions for $\varphi$, we also have
\begin{align*}
&\zerobm \in \Hbf^\Tsf\Hbf(\xbm^k-\xbm^{k-1}+\gamma\tau \Gsf(\xbm^{k-1})) + \gamma \partial g(\xbm^k) \\
&\Leftrightarrow\quad \frac{1}{\gamma}\Hbf^\Tsf\Hbf (\xbm^k-\xbm^{k-1}) \in \partial g(\xbm^k) + \nabla h(\xbm^{k-1}),
\end{align*}
where we used eq.~\eqref{Eq:Sup:GenTweedie2} from Appendix~\ref{Sup:Sec:Theorem1}. This directly implies that the following inclusion
\begin{equation*}
\wbm(\xbm^k) \defn \frac{1}{\gamma}\Hbf^\Tsf\Hbf(\xbm^k-\xbm^{k-1}) + \nabla h(\xbm^k) - \nabla h(\xbm^{k-1}) \in \partial f(\xbm^k)
\end{equation*}
The norm of the subgradient $\wbm(\xbm^k) $ can be bounded as follows
\begin{align}
\|\wbm(\xbm^k)\|_2 
\nonumber &\leq \frac{1}{\gamma}\|\Hbf^\Tsf\Hbf(\xbm^k-\xbm^{k-1})\|_2 + \|\nabla h(\xbm^k)-\nabla h(\xbm^{k-1})\|_2 \\
\label{Eq:Sup:SubgradBnd}&\leq L\left(\alpha (\lambda/\mu)+1\right)\|\xbm^k-\xbm^{k-1}\|_2,
\end{align}
where we used the Lipschitz constant of $\nabla h$, $\gamma = \mu/(\alpha L)$, and $\lambda \geq \mu > 0$ defined in Assumption~\ref{As:LipschitzContinuity}.

By combining eqs.~\eqref{Eq:Sup:DescentBnd} and~\eqref{Eq:Sup:SubgradBnd}, we obtain the following inequality
\begin{equation}
\|\wbm(\xbm^k)\|_2^2 \leq A_1 \|\xbm^k-\xbm^{k-1}\|_2^2 \leq A_2 (f(\xbm^{k-1})-f(\xbm^k)),
\end{equation}
where $A_1 \defn L^2(\alpha(\lambda/\mu)+1)^2 > 0$ and $A_2 \defn 2 A_1/(L(\alpha-1)) > 0$. Hence, by averaging over $t \geq 1$ iterations, we can directly get the desired result
\begin{equation}
\min_{1 \leq k \leq t} \|\wbm(\xbm^k)\|_2^2 \leq \frac{1}{t}\sum_{k = 1}^t \|\wbm(\xbm^k)\|_2^2 \leq \frac{A_2(f(\xbm^0)-f^\ast)}{t}.
\end{equation}
This implies that $\wbm(\xbm^k)\rightarrow \zerobm$ as $t \rightarrow \infty$.
\end{proof}

\newpage
\section{Additional Numerical Results}
\label{Sup:Sec:AdditionalResults}

\subsection{Image denoising}
In this subsection, we show the performance of DRP on Gaussian image denoising. The corresponding measurement model is $\ybm =\xbm + \ebm$, where $\ebm$ is AWGN with the standard deviation $\sigma$ and $\xbm$ is the unknown clean image. We use the same SwinIR SR model, introduced in~\ref{subsec:SwinIR}, as the prior for denoising. The degradation model in the SwinIR prior is the operation $\Hbf$ corresponding to bicubic downsampling. The scaled proximal operator $\sprox_{\lambda g}$ in (\ref{Eq:ScaledProx}) with data-fidelity term $g (\xbm) = \frac{1}{2}\norm{\ybm -\xbm}_2^2$ can be written as
\begin{equation}
\label{Eq:ScaledProx_denoise_cg}
\sprox_{\gamma g}(\zbm) \defn (\Ibf + \gamma\Hbf^\Tsf\Hbf)^{-1} [\ybm + \gamma\Hbf^\Tsf\Hbf\zbm],
\end{equation}
which can be efficiently implemented using CG, as in Section~\ref{subsec:deblurring}.

 We compare DRP with one of SOTA denoising model DRUNet~\citep{Zhang.etal2019} on noise level ($\sigma = 0.1$). Figure~\ref{fig:denoise_set3c} and Figure~\ref{fig:denoise_cbsd68} illustrate the visual performance of DRP on the Set5 and CBSD68 datasets, respectively. Figure~\ref{fig:denoise_different_sr} further explores the impact of using different SR factors $q$ as priors, elucidating how these choices influence the visual quality of denoising.

\begin{figure}[h]
    \centering
    \includegraphics[width=.7\textwidth]{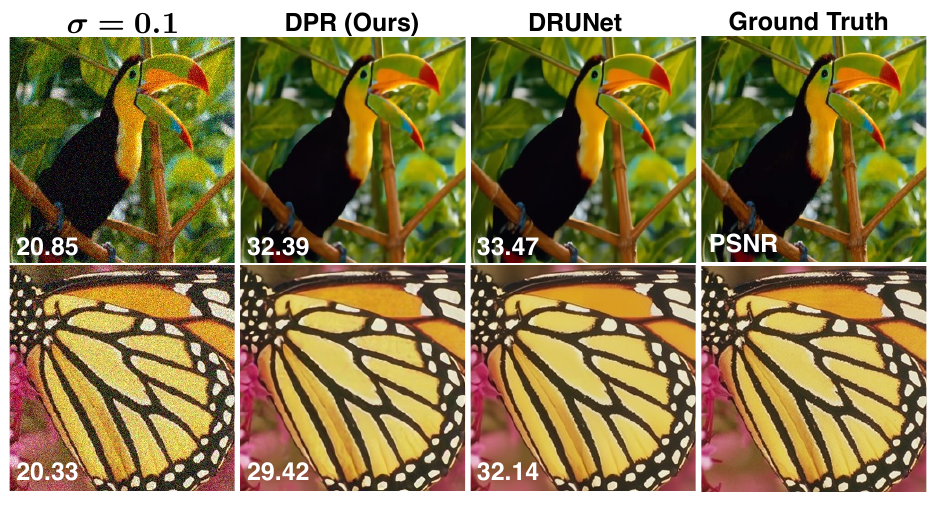}
    \caption{Illustration of denoising results of DRP on Set5 dataset with noise level $\sigma = 0.1$. Each image is labeled by its PSNR (dB) with respect to the original image.}
    \label{fig:denoise_set3c}
\end{figure}

\begin{figure}[h]
    \centering
    \includegraphics[width=.99\textwidth]{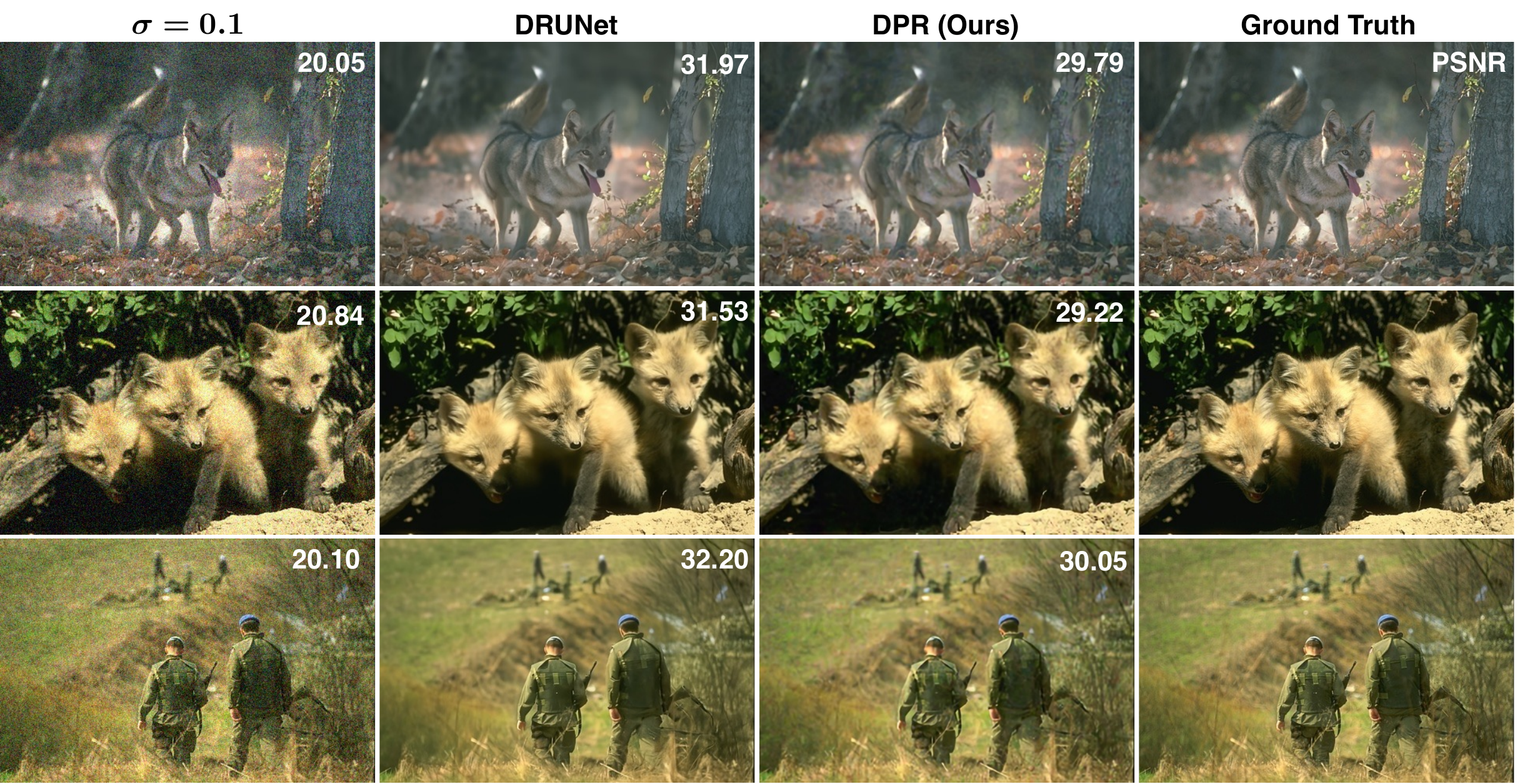}
    \caption{Illustration of denoising results of DRP on CBSD68 dataset with noise level $\sigma = 0.1$. Each image is labeled by its PSNR (dB) with respect to the original image.}
    \label{fig:denoise_cbsd68}
\end{figure}

\begin{figure}[h]
    \centering
    \includegraphics[width=.85\textwidth]{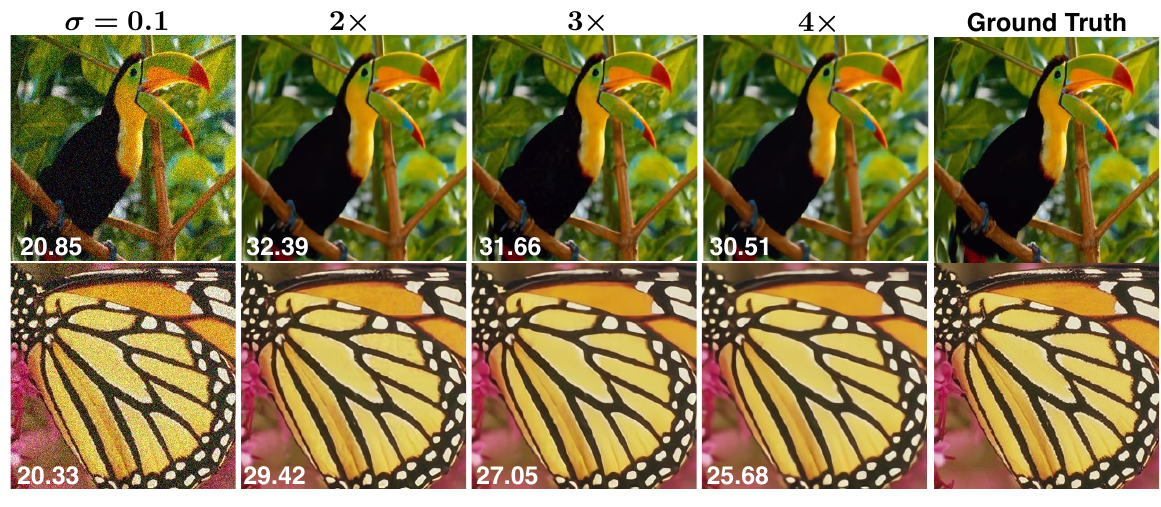}
    \caption{Illustration of denoising results of DRP on Set5 dataset with three SR level prior ($2\times$, $3\times$ and $4\times$). Each image is labeled by its PSNR (dB) with respect to the original image.}
    \label{fig:denoise_different_sr}
\end{figure}

\clearpage
\subsection{Comparison With GS-PnP}
In this subsection, we will compare DRP with the recent gradient-step denoiser PnP method (GS-PnP)~\citep{Hurault.etal2022}. These comparisons were not included in the main paper due to space, but are provided here for completeness. GS-PnP provides comparable performance on image deblurring and single image super resolution as DPIR~\citep{Zhang.etal2019}, but comes with theoretical convergence guarantees.

Table~\ref{table:deblur_vs_gspnp_results} shows that DRP outperforms both DPIR and GS-PnP on image deblurring in most settings in terms of PSNR.  Similarly, Table~\ref{table:sisr_vs_gspnp_results} shows that DRP can achieve better SISR performance in terms of PSNR compared to both methods. Figure~\ref{fig:sisr_vs_gspnp_results} provides additional visual results on SISR showing that DRP can recover intricate details and sharpen features.

\begin{table}[h]
\centering
\renewcommand\arraystretch{1.1}
\setlength{\tabcolsep}{5.8pt}

\begin{tabular}{cc|ccccc}
\hline
Kernel & Datasets  & GS-PnP & DPIR & DRP \\ \hline
 & Set3c   & 29.53 & \underline{\color[HTML]{00008A}29.78} & \textbf{\color[HTML]{D52815} 30.69} \\
 \multirow{-3}{*}{\raisebox{-1.2\height}{\includegraphics[width=0.04\textwidth]{figures/kernel_1.6.png}}}
 & CBSD68 & \underline{\color[HTML]{00008A}28.86}  & 28.70 & \textbf{\color[HTML]{D52815} 29.10} \\
 \hline
 & Set3c & \underline{\color[HTML]{00008A}27.52} & 27.32 & \textbf{\color[HTML]{D52815}27.89} \\
 \multirow{-3}{*}{\raisebox{-1.2\height}{\includegraphics[width=0.04\textwidth]{figures/kernel_2.0.png}}}& CBSD68 & 27.44 & \textbf{\color[HTML]{D52815}27.52} & \underline{\color[HTML]{00008A}27.46} \\ \hline
\end{tabular}
\caption{PSNR performance of GS-PnP, DPIR, and DRP for image deblurring on Set3c and CBSD68 datasets on two blur kernels.  The \textbf{\color[HTML]{D52815}best} and \underline{\color[HTML]{00008A} second best} results are highlighted.}
\label{table:deblur_vs_gspnp_results}
\end{table}

\begin{table}[h]
\centering
\renewcommand\arraystretch{1.1}
\setlength{\tabcolsep}{5.8pt}

\begin{tabular}{cc|c|ccc}
\hline
SR & Kernels & Datasets & GS-PnP & DPIR & DRP \\ \hline
 &  & Set3c & 28.23 & \underline{\color[HTML]{00008A}28.18} & \textbf{\color[HTML]{D52815} 29.26} \\
 & \multirow{-2}{*}{\raisebox{-0.3\height}{\includegraphics[width=0.04\textwidth]{figures/kernel_1.6.png}}} & CBSD68 & 28.03 & \underline{\color[HTML]{00008A}27.97} & \textbf{\color[HTML]{D52815} 28.12} \\ \cline{2-6} 
 &  & Set3c & 26.19 & \underline{\color[HTML]{00008A}26.80} & \textbf{\color[HTML]{D52815} 27.41} \\
    \multirow{-4}{*}{$2\times$} & \multirow{-2}{*}{\raisebox{-0.3\height}{\includegraphics[width=0.04\textwidth]{figures/kernel_2.0.png}}} & CBSD68 & 26.79 & \underline{\color[HTML]{00008A}26.98} & \textbf{\color[HTML]{D52815} 26.98} \\ \cline{1-6} 
 &  & Set3c & 26.20 & \underline{\color[HTML]{00008A}26.64} & \textbf{\color[HTML]{D52815} 27.77} \\
 & \multirow{-2}{*}{\raisebox{-0.3\height}{\includegraphics[width=0.04\textwidth]{figures/kernel_1.6.png}}} & CBSD68 & 26.77 & \underline{\color[HTML]{00008A}26.80} & \textbf{\color[HTML]{D52815} 27.18} \\ \cline{2-6} 
 &  & Set3c & 25.18 & \underline{\color[HTML]{00008A}25.84} & \textbf{\color[HTML]{D52815} 26.84} \\
\multirow{-4}{*}{$3\times$} & \multirow{-2}{*}{\raisebox{-0.3\height}{\includegraphics[width=0.04\textwidth]{figures/kernel_2.0.png}}} & CBSD68 & 26.30 & \underline{\color[HTML]{00008A}26.39} & \textbf{\color[HTML]{D52815} 26.60} \\ \hline
\end{tabular}

\caption{PSNR performance of GS-PnP, DPIR, and DRP for $2\times$ and $3\times$ SISR on the Set3c and CBSD68 datasets, using two blur kernels. The \textbf{\color[HTML]{D52815}best} and \underline{\color[HTML]{00008A} second best} results are highlighted.}
\label{table:sisr_vs_gspnp_results}
\end{table}

\begin{figure}[h]
    \centering
    \includegraphics[width=.95\textwidth]{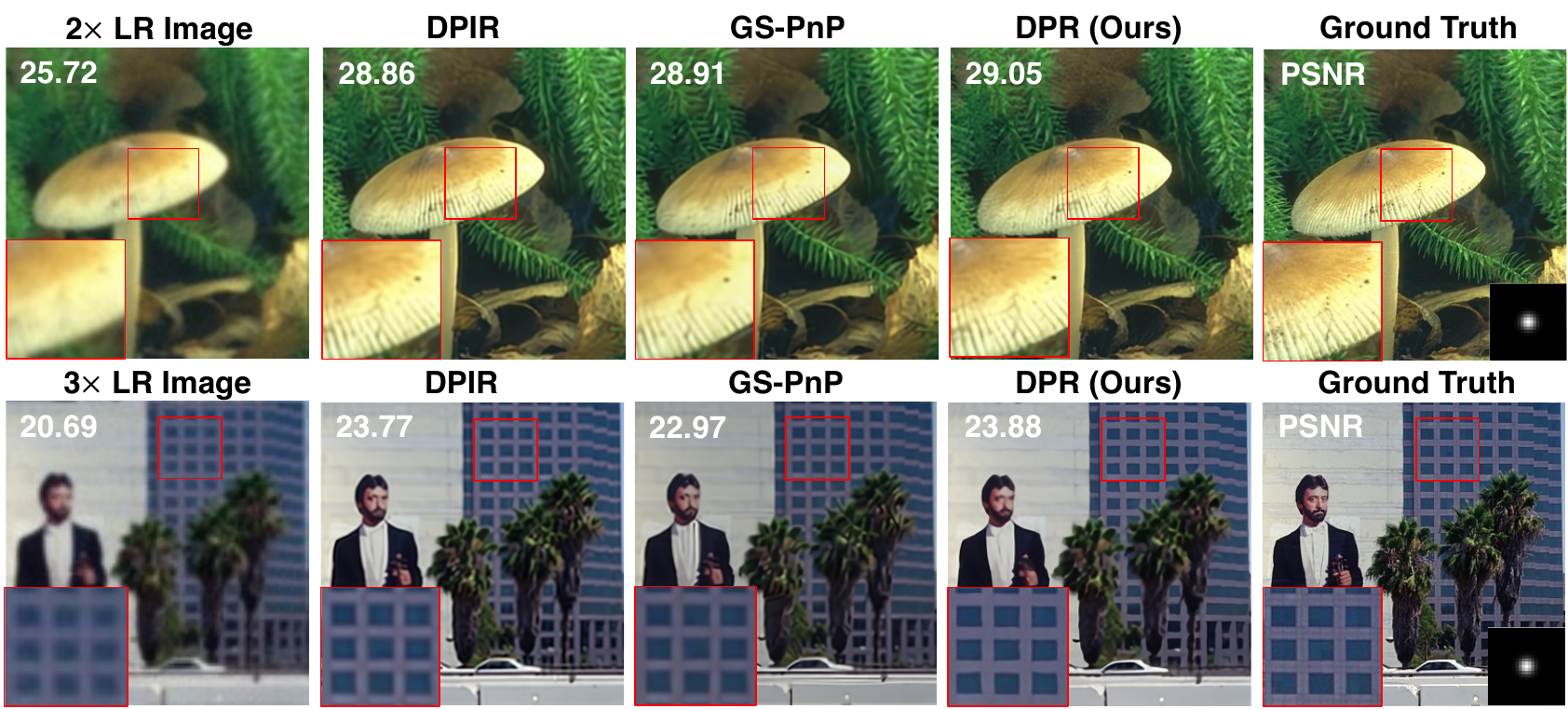}
    \caption{Illustration of SISR results of DRP compared with two SOTA denoiser based PnP method DPIR~\citep{Zhang.etal2019} and GS-PnP~\citep{Hurault.etal2022}. The top row displays the performance for $2\times$ SISR task, while the bottom row showcases results for $3\times$ SISR task. Each image is labeled by its PSNR (dB) with respect to the original image, and the visual difference is highlighted by the boxes in the left bottom corner.}
    \label{fig:sisr_vs_gspnp_results}
\end{figure}

\newpage
\subsection{Comparison with Diffusion Posterior Sampling}

There is a growing interest in using denoisers within diffusion models for solving inverse problems~\citep{zhu2023denoising, wang2022zero, chung2023diffusion}. One of the most wiedely-adopted diffusion model in this context is the \emph{diffusion posterior sampling (DPS)} method from~\citep{chung2023diffusion}, which integrates pre-trained denoisers and measurement models for posterior sampling. One may argue that DPS is related to PnP due to the use of image denoisers as priors. In this section, we present results comparing DRP with DPS for deblurring human faces. We used the public implementation of DPS on the GitHub page that uses the prior specifically trained on human face image dataset~\citep{chung2023diffusion}. DRP uses the same SwinIR model trained on general image datasets (see Section~\ref{subsec:SwinIR}). DPS and DRP are related but very different classes of methods. While DPS seeks to use denoisers to generate perceptually realistic solutions to inverse problems, DRP enables the adaptation of pre-trained restoration models as priors for solving other inverse problems.

Table~\ref{table:deblur_vs_dps_results} presents PSNR results obtained by DPS and DRP for human face deblurring. While we omitted the visual results from the paper for the privacy reasons, we will be happy to provide them if requested by the reviewers. Overall, DPS achieves more perceptually realistic images, while DRP achieves higher PSNR and more closely matches the ground truth images. This is not surprising when considering the generative nature of DPS. A similar observation is available in the original DPS publication, which reported better PSNR and SSIM performance of PnP-ADMM relative to DPS on SISR and deblurring (see Appendix E in~\citep{chung2023diffusion}).

\begin{table}[h]
\centering
\renewcommand\arraystretch{1.9}
\setlength{\tabcolsep}{5.8pt}

\begin{tabular}{c|cccc}
\hline
Kernel   & DPS & DRP \\ \hline
 \multirow{-1}{*}{\raisebox{-0.2\height}{\includegraphics[width=0.04\textwidth]{figures/kernel_1.6.png}}}
   & 29.61 & \textbf{\color[HTML]{D52815} 34.61} \\
 \hline
 \multirow{-1}{*}{\raisebox{-0.2\height}{\includegraphics[width=0.04\textwidth]{figures/kernel_2.0.png}}}
   & 28.80 & \textbf{\color[HTML]{D52815} 33.05} \\\hline
\end{tabular}
\caption{PSNR performance of DPS and DRP for image deblurring on three sample images from FFHQ validation set (provided in the DPS GitHub project) with two blur kernels.  The \textbf{\color[HTML]{D52815}best} results are highlighted.}
\label{table:deblur_vs_dps_results}
\end{table}

\newpage
\subsection{Performance of Bicubic SwinIR on a mismatched SISR task}
\label{Sup:Sec:SwinIR}

In this section, our aim is to bring a noteworthy point: the SwinIR SR prior we employed in our DRP method are specifically trained for bicubic SR task. Consequently, a direct application of SwinIR to SISR task could potentially yield sub-optimal results. This observation implies that our DRP algorithm have the capacity to using the mismatched restoration model as an implicit prior, effectively adapting it for general image restoration tasks.

Figure~\ref{fig:SwinIR_SISR} presents a visual comparison on set3c datasets, accompanied by PSNR values in relation to the groundtruth image. As shown in the figure, diectly using SwinIR trained for bicubic SR task can not handle SISR task, while using it within our GPoP algorithm as a prior can lead to SOTA performance.
\begin{figure}[h]
    \centering
    \includegraphics[width=.8\textwidth]{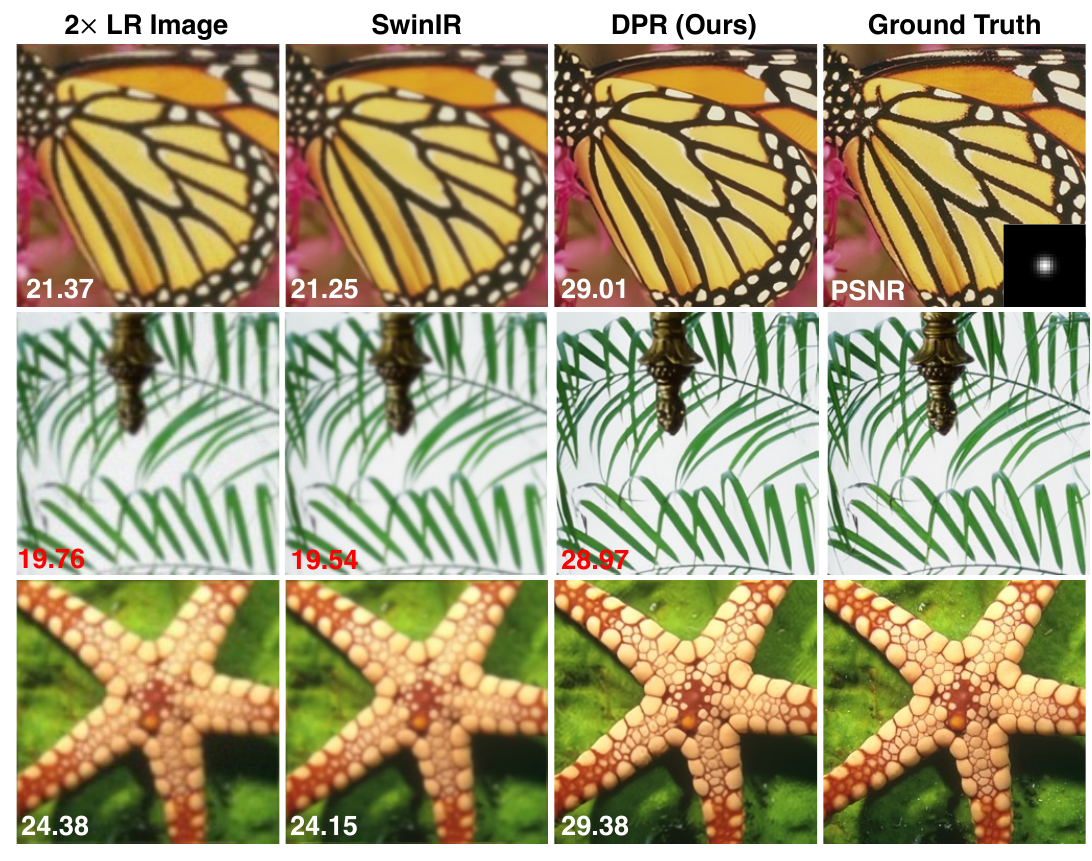}
    \caption{Illustration of $2\times$ SISR results of DRP compared with directly using bicubic SR SwinIR. Each image is labeled by its PSNR (dB) with respect to the original image.}
    \label{fig:SwinIR_SISR}
\end{figure}

\end{document}